\theoremstyle{plain}
\newtheorem{theorem}{\protect\theoremname}
\theoremstyle{plain}
\theoremstyle{plain}
\theoremstyle{plain}
\newtheorem{lemma}[theorem]{\protect\lemmaname}
\theoremstyle{definition}
\newtheorem{example}[theorem]{\protect\examplename}
\theoremstyle{definition}
\theoremstyle{definition}
\newtheorem{remark}[theorem]{\protect\remarkname}
  \providecommand{\corollaryname}{Corollary}
  \providecommand{\examplename}{Example}
  \providecommand{\lemmaname}{Lemma}
  \providecommand{\propositionname}{Proposition}
  \providecommand{\theoremname}{Theorem}
  \providecommand{\definitionname}{Definition}
  \providecommand{\remarkname}{Remark}
\begin{document}

%
\title{Complete Weight Enumerator of a Family of Linear Codes from Cyclotomy}
%
%
%

\author{Shudi Yang    and
        Zheng-An Yao and
        Chang-An Zhao 
	
\thanks{S.D. Yang is with the Department of Mathematics,
Sun Yat-sen University, Guangzhou 510275,
              and School of Mathematical
Sciences, Qufu Normal University, Shandong 273165, P.R.China.\protect\\

Z.-A. Yao and C.-A Zhao are with the
               Department of Mathematics,
Sun Yat-sen University, Guangzhou 510275, P.R.China.\protect\\
	\protect\\
	E-mail: yangshd3@mail2.sysu.edu.cn,~{mcsyao@mail.sysu.edu.cn}}
\thanks{Manuscript received *********; revised ********.}
}

%
%

\markboth{Journal of \LaTeX\ Class Files,~Vol.~13, No.~9, September~2014}%
{Shell \MakeLowercase{\textit{et al.}}: Bare Demo of IEEEtran.cls for Journals}
%



\maketitle

\begin{abstract}
Linear codes have been an interesting topic in both theory and practice for many years.
In this paper, for a prime $p$, we determine the explicit complete weight enumerators of a family of linear codes over $\mathbb{F}_p$ with defining set related to cyclotomy. These codes may have applications in cryptography and secret sharing schemes.
\end{abstract}

\begin{IEEEkeywords}
Linear code, complete weight enumerator, cyclotomy, Gaussian period, period polynomial.
\end{IEEEkeywords}

%
\IEEEpeerreviewmaketitle

\section{Introduction}\label{sec:intro}

 Throughout this paper, let $p$ be a prime and $r=p^m$ for some positive integer $m$. Denote by $\mathbb{F}_p$ a
finite field with $p$ elements. An $[n, \kappa, \delta]$ linear code
$C$ over $\mathbb{F}_p$ is a $\kappa$-dimensional subspace of
$\mathbb{F}_p^n$ with minimum distance $\delta$~\cite{macwilliams1977theory,ding2014differencesets}.

The complete weight enumerator~\cite{macwilliams1977theory,Blake1991} of a code $C$ over $\mathbb{F}_p$ enumerates the codewords according to the number of symbols of each kind contained in each codeword. Denote the
field elements by $\mathbb{F}_p=\{w_0,w_1,\cdots,w_{p-1}\}$, where $w_0=0$.
Also let $\mathbb{F}_p^*$ denote $\mathbb{F}_p\backslash\{0\}$.
For a codeword $\mathsf{c}=(c_0,c_1,\cdots,c_{n-1})\in \mathbb{F}_p^n$, let $w[\mathsf{c}]$ be the
complete weight enumerator of $\mathsf{c}$ defined as
$$w[\mathsf{c}]=w_0^{k_0}w_1^{k_1}\cdots w_{p-1}^{k_{p-1}},$$
where $k_j$ is the number of components of $\mathsf{c}$ equal to $w_j$, $\sum_{j=0}^{p-1}k_j=n$.
The complete weight enumerator of the code $C$ is then
$$\mathrm{CWE}(C)=\sum_{\mathsf{c}\in C}w[\mathsf{c}].$$

The weight distribution of a linear code has been studied extensively for decades and we
refer the reader to \cite{zheng2015weightseveral,yang2015weightenu,ding2011,ding2013hamming,dinh2015recent,feng2008weight,li2014weight,luo2008weight,sharma2012weight,vega2012weight,wang2012weight,yuan2006weight,zheng2015weight} and references therein for an overview of the related researches. It is not difficult to see that the complete weight enumerators are just the (ordinary) weight
enumerators for binary linear codes. For nonbinary linear codes, the weight enumerators can be obtained from their complete weight enumerators.

 The information of the complete weight enumerator of a linear code is of vital use in theories and practical applications. For instance, Blake and Kith investigated the complete weight enumerator of Reed-Solomon codes and showed that they could be helpful in soft decision decoding~\cite{Blake1991,kith1989complete}. In~\cite{helleseth2006}, the study of the monomial and quadratic bent functions was related to the complete weight enumerators of linear codes. It was illustrated by Ding $et~al.$~\cite{ding2007generic,Ding2005auth} that the complete weight enumerator can be applied to calculate the deception probabilities of certain authentication codes. In~\cite{chu2006constantco,ding2008optimal,ding2006construction}, the authors studied the complete weight enumerators of some constant
composition codes and presented some families of optimal constant composition codes.

However, it is generally an extremely difficult problem to evaluate the complete
weight enumerator of linear codes and there are few information on this topic in literature besides the above mentioned~\cite{Blake1991,kith1989complete,chu2006constantco,ding2008optimal,ding2006construction}.
Kuzmin and Nechaev considered the
generalized Kerdock code and related linear codes over Galois rings and determined their complete weight enumerators in~\cite{kuzmin1999complete} and \cite{kuzmin2001complete}. Very recently, the complete weight enumerators of linear codes over finite fields were studied in~\cite{li2015complete,li2015linear,BaeLi2015complete,yang2015linear,yang2015cyclic}.

In~\cite{ding2014differencesets,ding2007cyclotomic,dingkelan2014binary,ding2015twodesign,ding2015twothree}, Ding $et~al.$ proposed a generic method to construct linear codes with a few nonzero weights by employing trace function. We introduce this construction below.

Let $D=\{d_1,d_2,\cdots,d_z\}\subseteq \mathbb{F}_{r}$ for a positive integer $z$. Denote by $\mathrm{Tr}$ the trace function from $\mathbb{F}_{r}$ to $\mathbb{F}_{p}$. A linear code of length $z$ over $\mathbb{F}_{p}$ is defined by
\begin{equation}\label{def:CD}
    C_{D}=\{(\mathrm{Tr}(xd))_{d\in D}:
       x\in \mathbb{F}_{r}\},
\end{equation} and $D$ is called the defining set of this code $C_{D}$.

The authors in~\cite{ding2014differencesets,ding2007cyclotomic,dingkelan2014binary,ding2015twodesign,ding2015twothree} presented such linear codes and investigated their weight enumerators for some
well chosen defining sets. Along this research line, the authors of~\cite{li2015linear} and~\cite{yang2015linear} investigated
the complete weight enumerators of linear codes with defining sets for certain special cases.

Let $\alpha$ be a fixed primitive element of $\mathbb{F}_r$ and $r-1=nN$ for positive integers $n>1$ and $N>1$. We always assume that $N|\frac{r-1}{p-1}$. The cyclotomic classes of order $N$ are defined by $C_i^{(N,r)}=\alpha
^i\left<\alpha^N\right>$ for $i=0,1,\cdots,N-1$, where
$\langle\alpha^N\rangle$ denotes the subgroup of $\mathbb{F}_r^*$
generated by $ \alpha^N$.

In~\cite{ding2007cyclotomic}, Ding and Niederreiter constructed two classes of cyclotomic linear codes $C_{\bar{D}}$ over $\mathbb{F}_p$ of order $3$ with defining sets $\bar{D}=D_0$ and $\bar{D}=D_0\bigcup D_1$, where  $D_j=\{\alpha^j\alpha^{3i}:i=0,1,\cdots,\frac{r-1}{3(p-1)}-1\}$ for $j=0,1$. Inspired by the original idea proposed in~\cite{ding2007cyclotomic}, we shall study the complete weight enumerator of $C_D$ with defining sets constructed from cyclotomy.

In this paper, the defining set $D$ of the code $C_D$ is given by
\begin{eqnarray}\label{def:D}
D=\bigcup_{i\in I}C_i^{(N,r)},
\end{eqnarray}
where $I\subset\{0,1,\cdots,N-1\}$ with $\#I=l>0$.

Therefore the code $C_D$ defined by \eqref{def:CD} with defining set $D$ of \eqref{def:D} is a linear code with length $nl$ and dimension at most $m$. Obviously, the different choices of $I$ lead to different codes $C_D$.

 We employ Gaussian periods to determine the complete weight enumerator of $C_D$. A general formula is given for the defining set $D$ of \eqref{def:D}. Moreover, as applications of this formula, we explicitly present the complete weight enumerator of $C_D$ for the special cases of $N=3$ and $N=4$, respectively. In fact, the defining set of~\cite{ding2007cyclotomic} is a complete set of coset representatives of the factor group $C_0^{(3,r)}/{\mathbb{F}_p^*}$. We generalize it to the whole coset $C_i^{(3,r)}$ with $i=0,1,2$. As it turns out that, the codes $C_D$ for $N=3$ and $N=4$ are linear codes with few weights. More precisely, they have nonzero weights not more than $four$ and thus will have many applications in cryptography~\cite{YD2006} and secret sharing schemes~\cite{carlet2005linear}. They can be employed to construct constant composition codes~\cite{chu2006constantco,Luo2011constantco} which have important use in communications engineering~\cite{Milenkovic2006designDNA}. We also give some examples to illustrate our results, which shows that some of these codes are optimal or have the best parameters due to Ding's tables~\cite{ding2014differencesets}.


It should be remarked that, when $D=C_0^{(N,r)}$, our result is as same as that of~\cite{li2015complete}. We give the result by Gaussian periods and the authors of~\cite{li2015complete} gave the result by Gauss sums. They are equivalent though in different manner. In addition, when $N=3$, since $C_0^{(3,r)}=\{xy:x\in\mathbb{F}_{p}^* ~~\mathrm{and}~~y\in D_0\}=(\mathbb{F}_{p}^*)D_0$,  the weight enumerator of $C_{\bar{D}}$ in~\cite{ding2007cyclotomic} can be obtained from that of $C_D$. Thus, we extend the results in~\cite{ding2007cyclotomic} to some extent. These will be shown in details in the consequent sections.

The remainder of this paper is organized as follows. Section~\ref{sec:mathtool}
recalls some definitions and results about cyclotomic
classes and Gaussian periods which will be useful in the sequel. Section~\ref{sec:main} presents the complete weight enumerator of the code $C_D$ with defining set $D$, including a general strategy and the special case of $N=3$ and $N=4$. Section~\ref{sec:main}
concludes this paper.

\section{Mathematical foundation of Cyclotomy in $\mathbb{F}_r$}\label{sec:mathtool}

In this section, we introduce some necessary mathematical foundation which will be of use in the sequel.

Recall that $p$ is a prime, $r=p^{m}$ and $r-1=nN$ for two positive integers $n>1$ and $N>1$. Let $\alpha$ be a fixed primitive element of $\mathbb{F}_r$. Define $C_i^{(N,r)}=\alpha
^i\left<\alpha^N\right>$ for $i=0,1,\cdots,N-1$, where
$\langle\alpha^N\rangle$ denotes the subgroup of $\mathbb{F}_r^*$
generated by $ \alpha^N$. The cosets $C_i^{(N,r)}$ are called the \emph{cyclotomic classes} of order $N$. It is easily seen
that $\# C_i^{(N,r)}=\frac{r-1}{N}=n$ and
$C_i^{(N,r)}=C_{i\pmod N}^{(N,r)}$. Moreover, $C_0^{(N,r)},~C_1^{(N,r)},\cdots,C_{N-1}^{(N,r)}$ and $\{0\}$ form a partition of $\mathbb{F}_r$.

The \emph{Gaussian
periods} are defined by
 $$\eta_{i}^{(N,r)}=\sum_{x\in C_i^{(N,r)}}\zeta_p^{\mathrm{Tr}(x)},$$
 where $\zeta_p=e^{2\pi\sqrt{-1}/p}$ and $\mathrm{Tr}$ is the trace function from $\mathbb{F}_r$
to $\mathbb{F}_p$. Clearly $\eta_{i}^{(N,r)}=\eta_{i\pmod N}^{(N,r)}$ if $i\geqslant N$.

 Generally speaking, it is very hard to compute the values of Gaussian periods. They are known only in a few cases and they can be obtained from \emph{period polynomials} $\Psi_{(N,r)}(X)$ which are defined by
%
%
%
%

$$\Psi_{(N,r)}(X)=\prod_{i=0}^{N-1}(X-\eta_{i}^{(N,r)}).$$

It was shown that $\Psi_{(N,r)}(X)$ is a polynomial with integer coefficients~\cite{myerson1981period}. The following two lemmas, which were cited from~\cite{myerson1981period}, will be of use in the sequel.
%

 \begin{lemma}\emph{\cite{myerson1981period}}\label{periodpolyfac fac N=3}
Let $N=3$ and $r=p^{m}$. We have the following results on the factorization of $\Psi_{(3,r)}(X)$.\\
(a) If $p\equiv2 \pmod3$, then $m\equiv0 \pmod2$, and
\begin{eqnarray*}
\Psi_{(3,r)}(X)=\left\{\begin{array}{lll}
  3^{-3}(3X+1+2\sqrt{r})(3X+1-\sqrt{r})^2  &&\mathrm{if}~~m\equiv0 \pmod4,\\
  3^{-3}(3X+1-2\sqrt{r})(3X+1+\sqrt{r})^2  &&\mathrm{if}~~m\equiv2 \pmod4.\\
  \end{array} \right.
\end{eqnarray*}
(b) If $p\equiv1\pmod3$, and $m\equiv0 \pmod3$, then
\begin{eqnarray*}\Psi_{(3,r)}(X)=3^{-3}(3X+1-s_1 r^{\frac{1}{3}})
(3X+1+\frac{1}{2}(s_1+9t_1) r^{\frac{1}{3}})(3X+1+\frac{1}{2}(s_1-9t_1) r^{\frac{1}{3}}).
\end{eqnarray*}
where $s_1$ and $t_1$ are given by $4r^{\frac{1}{3}}=s_1^2+27t_1^2$, $s_1\equiv1\pmod3$ and $\mathrm{gcd}(s_1,p)=1$.
 \end{lemma}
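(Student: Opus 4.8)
\emph{Proof sketch.} This factorization is classical, being the content of \cite{myerson1981period}; the plan is to evaluate the three Gaussian periods $\eta_0^{(3,r)},\eta_1^{(3,r)},\eta_2^{(3,r)}$ from Gauss sums and then expand $\prod_{i=0}^{2}(X-\eta_i^{(3,r)})$. First I would record that $\eta_0^{(3,r)}+\eta_1^{(3,r)}+\eta_2^{(3,r)}=\sum_{x\in\mathbb{F}_r^*}\zeta_p^{\mathrm{Tr}(x)}=-1$, which fixes the coefficient of $X^2$ to be $1$ and explains the ubiquitous factor $3X+1$. Next, fix a multiplicative character $\chi$ of $\mathbb{F}_r^*$ of order $3$ normalized by $\chi(\alpha)=\zeta_3$ with $\zeta_3=e^{2\pi\sqrt{-1}/3}$, and write $G(\chi)=\sum_{x\in\mathbb{F}_r^*}\chi(x)\zeta_p^{\mathrm{Tr}(x)}$ for the associated Gauss sum. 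Orthogonality of characters gives $\eta_i^{(3,r)}=\tfrac13\bigl(-1+\zeta_3^{-i}G(\chi)+\zeta_3^{i}G(\chi^2)\bigr)$, and since $\chi$ has odd order we have $\chi(-1)=1$ and hence $G(\chi^2)=G(\overline{\chi})=\overline{G(\chi)}$, so $\eta_i^{(3,r)}=\tfrac13\bigl(-1+2\,\mathrm{Re}\,(\zeta_3^{-i}G(\chi))\bigr)$. Everything thus reduces to evaluating the single cubic Gauss sum $G(\chi)$.

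For part (a), $p\equiv 2\pmod 3$ forces $m$ to be even (otherwise $3\nmid r-1$) and, because $p\equiv-1\pmod 3$, places us in the semiprimitive case. Here I would invoke the classical semiprimitive Gauss sum evaluation: $G(\chi)$ is a rational integer equal to $\varepsilon\sqrt r=\varepsilon p^{m/2}$, where the sign $\varepsilon=\pm1$ is determined by the residue of $m$ modulo $4$. Since $\mathrm{Re}\,\zeta_3^{-i}$ equals $1$ for $i=0$ and $-\tfrac12$ for $i=1,2$, this gives $\eta_0^{(3,r)}=\tfrac13(-1+2\varepsilon\sqrt r)$ and $\eta_1^{(3,r)}=\eta_2^{(3,r)}=\tfrac13(-1-\varepsilon\sqrt r)$ (the coincidence $\eta_1^{(3,r)}=\eta_2^{(3,r)}$ also follows directly, since the Frobenius $x\mapsto x^p$ fixes $C_0^{(3,r)}$ and interchanges $C_1^{(3,r)}$ and $C_2^{(3,r)}$); expanding $(X-\eta_0^{(3,r)})(X-\eta_1^{(3,r)})^2$ then yields the two displayed forms according to whether $m\equiv0$ or $m\equiv 2\pmod 4$.

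For part (b), with $p\equiv1\pmod 3$ and $3\mid m$, I would set $q=p^{m/3}$ and take $\chi$ to be the lift through the norm map of a cubic character $\chi_0$ of $\mathbb{F}_q$. The Davenport--Hasse relation then gives $G(\chi)=G_{\mathbb{F}_q}(\chi_0)^3=q\,J$ with $J=J_q(\chi_0,\chi_0)$ the cubic Jacobi sum over $\mathbb{F}_q$, which satisfies $|J|^2=q=r^{1/3}$; the primary normalization $J\equiv-1\pmod 3$ in $\mathbb{Z}[\zeta_3]$ allows one to write $J=\tfrac12(s_1+3t_1\sqrt{-3})$ with $4r^{1/3}=s_1^2+27t_1^2$, $s_1\equiv1\pmod 3$ and $\gcd(s_1,p)=1$. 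Substituting into $\eta_i^{(3,r)}=\tfrac13(-1+2\,\mathrm{Re}\,(\zeta_3^{-i}G(\chi)))$ and reading off the real parts of $J$, $\zeta_3^{-1}J$ and $\zeta_3^{-2}J$ produces $\eta_0^{(3,r)}=\tfrac13(-1+s_1r^{1/3})$ and $\eta_1^{(3,r)},\eta_2^{(3,r)}=\tfrac13\bigl(-1-\tfrac12(s_1\mp 9t_1)r^{1/3}\bigr)$; since replacing $t_1$ by $-t_1$ merely swaps these last two periods, the product $\prod_{i=0}^{2}(X-\eta_i^{(3,r)})$ is exactly the stated polynomial. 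The only non-elementary ingredient, and the step I expect to be the real obstacle, is the exact determination of the cubic Gauss sum --- the sign $\varepsilon$ in the semiprimitive case of (a), and the primary normalization (the conditions $s_1\equiv1\pmod 3$ and $\gcd(s_1,p)=1$) of the Jacobi sum in (b) --- which is precisely the analytic input furnished by \cite{myerson1981period}.
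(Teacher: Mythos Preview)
The paper does not prove this lemma at all; it is quoted verbatim as a result of Myerson \cite{myerson1981period} and used as a black box (the paper only \emph{applies} it, reading off the Gaussian periods in Lemma~\ref{lem:gauss period of order 3} and plugging them into the general formula \eqref{eq:generalcwe}). So there is no ``paper's own proof'' to compare against, and your sketch in fact supplies what the paper omits.

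That said, your outline is correct and is the standard route: express each $\eta_i^{(3,r)}$ via orthogonality as $\tfrac13(-1+\zeta_3^{-i}G(\chi)+\zeta_3^{i}\overline{G(\chi)})$, then evaluate the cubic Gauss sum $G(\chi)$ --- by the semiprimitive formula in case (a), and by Davenport--Hasse plus the classical cubic Jacobi-sum evaluation in case (b). Two small points worth tightening if you expand this into a full proof: (i) in (a), you should pin down the sign $\varepsilon$ explicitly ($\varepsilon=-1$ when $m\equiv0\pmod4$ and $\varepsilon=+1$ when $m\equiv2\pmod4$, matching the two displayed cases), rather than leave it as ``determined by the residue of $m$ modulo $4$''; (ii) in (b), the identification of $\eta_0^{(3,r)}$ with the root $\tfrac13(-1+s_1 r^{1/3})$ depends on the choice of primitive root $\alpha$ and of the character $\chi$, but this is harmless since the \emph{polynomial} $\Psi_{(3,r)}(X)$ is symmetric in the three periods --- it would be cleaner to say so explicitly. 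Your honest flagging of the Gauss-sum evaluation as the genuine analytic input is exactly right; that is precisely what Myerson's paper provides.
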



  \begin{lemma}\emph{\cite{myerson1981period}}\label{periodpoly fac N=4}
Let $N=4$ and $r=p^{m}$. We have the following results on the factorization of $\Psi_{(4,r)}(X)$.\\
(a) If $p\equiv3 \pmod4$, then $m\equiv0 \pmod2$, and
\begin{eqnarray*}
\Psi_{(4,r)}(X)=\left\{\begin{array}{lll}
  4^{-4}(4X+1+3\sqrt{r})(4X+1-\sqrt{r})^3  &&\mathrm{if}~~m\equiv0 \pmod4,\\
  4^{-4}(4X+1-3\sqrt{r})(4X+1+\sqrt{r})^3  &&\mathrm{if}~~m\equiv2 \pmod4.\\
  \end{array} \right.
\end{eqnarray*}
(b) If $p\equiv1\pmod4$, and $m\equiv0 \pmod4$, then
\begin{eqnarray*}\Psi_{(4,r)}(X)&=&4^{-4}\left((4X+1)+\sqrt{r}+2r^{\frac{1}{4}}u_1\right)\left((4X+1)+\sqrt{r}-2r^{\frac{1}{4}}u_1\right)\\
&&\times\left((4X+1)-\sqrt{r}+4r^{\frac{1}{4}}v_1\right)
\left((4X+1)-\sqrt{r}-4r^{\frac{1}{4}}v_1\right),
\end{eqnarray*}
where $u_1$ and $v_1$ are given by $r^{\frac{1}{2}}=u_1^2+4v_1^2$, $u_1\equiv1\pmod4$ and $\mathrm{gcd}(u_1,p)=1$.
 \end{lemma}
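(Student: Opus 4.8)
The plan is to reduce both factorizations to the evaluation of two Gauss sums over $\mathbb{F}_r$. Fix $\zeta_4=\sqrt{-1}$, a multiplicative character $\chi$ of $\mathbb{F}_r^*$ of order $4$ with $\chi(\alpha)=\zeta_4$, write $\rho=\chi^2$ for the quadratic character, and set $g(\psi)=\sum_{x\in\mathbb{F}_r^*}\psi(x)\,\zeta_p^{\mathrm{Tr}(x)}$. Expanding the indicator of $C_\ell^{(4,r)}$ on $\mathbb{F}_r^*$ by orthogonality of characters gives
\[
4\,\eta_\ell^{(4,r)}+1=\sum_{j=1}^{3}\bar\chi^{\,j}(\alpha^\ell)\,g(\chi^j)=(-1)^\ell\,g(\rho)+2\,\mathrm{Re}\bigl(\zeta_4^{-\ell}\,g(\chi)\bigr),
\]
using $g(\chi^3)=g(\bar\chi)=\overline{g(\chi)}$, valid since $\chi(-1)=\zeta_4^{(r-1)/2}=1$ (as $4\mid r-1$). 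So all four roots of $\Psi_{(4,r)}(X)$ are explicitly determined by the pair $\bigl(g(\rho),g(\chi)\bigr)$, and each asserted factorization follows by substituting the values of these two sums and clearing the denominator $4^4$. The quadratic Gauss sum is handled uniformly: by the Hasse--Davenport relation, $g(\rho)=-(-g_1)^m$ where $g_1$ is the quadratic Gauss sum of $\mathbb{F}_p$, namely $\sqrt p$ if $p\equiv1\pmod 4$ and $\sqrt{-1}\,\sqrt p$ if $p\equiv3\pmod 4$; in part (a) this also forces $m\equiv0\pmod 2$ (otherwise $4\nmid r-1$) and gives $g(\rho)=-(-1)^{m/2}\sqrt r$, while in part (b) it gives $g(\rho)=-\sqrt r$.

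For part (a), $p\equiv3\pmod 4$ means $-1\equiv p\pmod 4$, so $\chi$ lies in the \emph{semi-primitive case}; there $g(\chi)$ is a rational integer, hence $g(\chi)=\pm\sqrt r$ because $|g(\chi)|^2=r$. Substituting $g(\chi)=\pm\sqrt r$ and $g(\rho)=-(-1)^{m/2}\sqrt r$ into the displayed identity shows $\eta_1^{(4,r)}=\eta_3^{(4,r)}$, and then the sign of $g(\rho)$ forces one of $\eta_0^{(4,r)},\eta_2^{(4,r)}$ to equal this common value as well (which one depends on the sign of $g(\chi)$, but the resulting multiset of roots does not). Thus three of the periods collapse to one value and the fourth stays distinct: with $\varepsilon=(-1)^{m/2}$, the values of $4\,\eta^{(4,r)}+1$ form the multiset $\{-3\varepsilon\sqrt r,\varepsilon\sqrt r,\varepsilon\sqrt r,\varepsilon\sqrt r\}$, whence $\Psi_{(4,r)}(X)=4^{-4}(4X+1+3\varepsilon\sqrt r)(4X+1-\varepsilon\sqrt r)^3$, which is the stated formula in the subcases $m\equiv0$ and $m\equiv2\pmod 4$.

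For part (b), $p\equiv1\pmod 4$ is not semi-primitive, so instead I would descend to $\mathbb{F}_p$: since $4\mid p-1$, $\chi$ is the lift via the norm map of an order-$4$ character $\lambda$ of $\mathbb{F}_p^*$, and the Hasse--Davenport relation gives $g(\chi)=-\bigl(g_{\mathbb{F}_p}(\lambda)\bigr)^m$. Here $g_{\mathbb{F}_p}(\lambda)^2=\pi\,g_{\mathbb{F}_p}(\lambda^2)=\pi\sqrt p$, where $\pi=J(\lambda,\lambda)\in\mathbb{Z}[\sqrt{-1}]$ is the classical quartic Jacobi sum, with $\pi\bar\pi=p$, odd real part, and normalized up to a unit and complex conjugation by a congruence modulo $2(1+\sqrt{-1})$. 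Raising to the $m$-th power and using $m\equiv0\pmod 4$ (so that $r^{1/4}=p^{m/4}\in\mathbb{Z}$) gives $g(\chi)=-r^{1/4}\,\pi^{m/2}$; every power of $\pi$ again has odd real part and even imaginary part, so we may write $\pi^{m/2}=u_1+2v_1\sqrt{-1}$ with $u_1$ odd, whence $u_1^2+4v_1^2=|\pi^{m/2}|^2=\sqrt r$ and $\gcd(u_1,p)=1$ — exactly the primitive representation appearing in the lemma, the condition $u_1\equiv1\pmod 4$ merely fixing a sign that does not affect $\Psi_{(4,r)}$ (which depends on $u_1,v_1$ through $u_1^2,v_1^2$ alone). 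Substituting $g(\rho)=-\sqrt r$ and $g(\chi)=-r^{1/4}(u_1+2v_1\sqrt{-1})$ into the displayed identity produces the four pairwise distinct values $-\sqrt r\mp 2r^{1/4}u_1$ and $\sqrt r\mp 4r^{1/4}v_1$ of $4\,\eta_\ell^{(4,r)}+1$, and multiplying the corresponding factors $X-\eta_\ell^{(4,r)}$ yields precisely the quartic in the lemma.

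The hard part will be the normalization bookkeeping in part (b): coordinating the sign $+\sqrt p$ of the $\mathbb{F}_p$ quadratic Gauss sum (Gauss's sign theorem), the Hasse--Davenport sign, and above all the precise normalization of the quartic Jacobi sum $\pi$ (the choice of $\sqrt{-1}$, the sign of its real part, and its behavior under $p\bmod 8$ through $\lambda(-1)$) so that the product of the four linear factors is genuinely a polynomial over $\mathbb{Z}$ of exactly the displayed shape. A self-contained alternative that avoids Jacobi-sum normalization is to compute the elementary symmetric functions of $\eta_0^{(4,r)},\dots,\eta_3^{(4,r)}$ directly from the order-$4$ cyclotomic numbers, which for $p\equiv1\pmod 4$ are classically known in terms of the representation $r=s^2+4t^2$; this is equivalent but trades one normalization issue for another.
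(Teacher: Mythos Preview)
The paper does not prove this lemma at all; it is quoted from Myerson's 1981 paper and used as a black box, so there is no in-paper argument to compare against. Your Gauss-sum route---express $4\eta_\ell+1$ in terms of $g(\rho)$ and $g(\chi)$, evaluate the quadratic sum by Hasse--Davenport, and in the non-semi-primitive case descend $g(\chi)$ to the quartic Jacobi sum over $\mathbb{F}_p$---is exactly the standard proof and is in fact how Myerson himself proceeds, so your proposal is not a different approach but a reconstruction of the cited one.

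One small slip worth fixing: the justification ``$\chi(-1)=\zeta_4^{(r-1)/2}=1$ as $4\mid r-1$'' is not quite right, since $\zeta_4^{(r-1)/2}=1$ needs $8\mid r-1$, not merely $4\mid r-1$. The stronger divisibility does hold under the hypotheses of both parts (for any odd $p$ one has $p^2\equiv 1\pmod 8$, and $m$ is even in (a) and divisible by $4$ in (b)), so the conclusion $g(\bar\chi)=\overline{g(\chi)}$ is correct, but the stated reason should be sharpened. Beyond that your outline is sound, and you have already identified the only genuinely delicate step: the sign and $2(1+i)$-normalization bookkeeping for $\pi=J(\lambda,\lambda)$ in part~(b).
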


\section{Main results}\label{sec:main}

We maintain al notations from the previous sections, and we want now to determine the complete weight enumerator of the codes $C_D$ defined by \eqref{def:CD} with defining set $D$ of \eqref{def:D}. These codes may have different property with different defining set $D$. Thus, we focus on a general case and two special cases.

\subsection{A general strategy}\label{subsec:general}

Recall that
\begin{equation*}
    C_{D}=\{(\mathrm{Tr}(xd))_{d\in D}:
       x\in \mathbb{F}_{r}\},
\end{equation*}
where $D=\bigcup_{i\in I}C_i^{(N,r)}$ with $I\subset\{0,1,\cdots,N-1\}$ and $\#I=l>0$.

Clearly $x=0$ gives the zero codeword which contributes $w_0^{nl}$ to the complete
weight enumerator. Thus, we only need to focus on $x\in\mathbb{F}_r^*$.

Let
\begin{equation*}
    N_k(\rho)=\#\{d\in D:\mathrm{Tr}(xd)=\rho\ \mathrm{and ~~}x\in C_k^{(N,r)}\},
\end{equation*}where $0\leqslant k \leqslant N-1$.

Note that $N|\frac{r-1}{p-1}$ leads to $\mathbb{F}_p^*\subseteq C_0^{(N,r)}$. Thus, we can deduce that
\begin{eqnarray}\label{eq:general N0}
   N_k(\rho)&=&\sum_{d\in D}
  \frac{1}{p}\sum_{y\in\mathbb{F}_{p}}\zeta_p^{y(\mathrm{Tr}(x d)-\rho)}\nonumber\\
  &=&\frac{n l }{p}+\frac{1}{p}\sum_{y\in\mathbb{F}_{p}^*}\zeta_p^{-y\rho}\sum_{d\in D}\zeta_p^{y\mathrm{Tr}(x d)}\nonumber\\
  &=&\frac{ n l}{p}+\frac{1}{p}\sum_{y\in\mathbb{F}_{p}^*}\zeta_p^{-y\rho}
      \left(\sum_{i\in I}\sum_{d\in C_i^{(N,r)}}\zeta_p^{\mathrm{Tr}(y x d)}\right)\nonumber\\
  &=&\frac{n l}{p}+\frac{1}{p}\sum_{y\in\mathbb{F}_{p}^*}\zeta_p^{-y\rho}
\sum_{i\in I}\eta_{k+i}^{(N,r)}\nonumber\\
  &=&\left\{\begin{array}{lll}
  \frac{n l}{p}+\frac{p-1}{p}\sum_{i\in I}\eta_{k+i}^{(N,r)}&&\mathrm{if}~~\rho=0,\\
  \frac{n l}{p}-\frac{1}{p}\sum_{i\in I}\eta_{k+i}^{(N,r)}&&\mathrm{otherwise},\\\end{array} \right.
\end{eqnarray}
where the fourth equality holds since $\mathbb{F}_p^*\subseteq C_0^{(N,r)}$.

Therefore, it follows from Equation \eqref{eq:general N0} that
\begin{eqnarray}\label{eq:generalcwe}
CWE(C_D)=w_0^{nl}+n\sum_{k=0}^{N-1}
w_0^{\frac{nl}{p}+\frac{p-1}{p}\sum_{i\in I}\eta_{k+i}^{(N,r)}}
\prod_{\rho=1}^{p-1}w_{\rho}^{\frac{nl}{p}-\frac{1}{p}\sum_{i\in I}\eta_{k+i}^{(N,r)}},
\end{eqnarray}
where $n=\# C_k^{(N,r)}=\frac{r-1}{N}$.

\subsection{Linear codes from cyclotomy of order N=3}\label{subsec:N=3}

In this subsection, we will give the complete weight enumerator of the code $C_D$ with defining set $D$ constructed from
cyclotomy of order $N=3$.

The results on Gaussian periods of order $N=3$ follow immediately from Lemma \ref{periodpolyfac fac N=3}.
\begin{lemma}\label{lem:gauss period of order 3}
Let $N=3$, $r=p^m$ and $N|\frac{r-1}{p-1}$. We have the following results on Gaussian periods.\\
(a) If $p\equiv1 \pmod3$, then $m\equiv0 \pmod3$ and
\begin{eqnarray*}
\eta_0^{(3,r)} &=& -\frac{1-s_1r^{\frac{1}{3}}}{3}, \\
\eta_1^{(3,r)} &=& -\frac{2+(s_1+9t_1)r^{\frac{1}{3}}}{6}, \\
\eta_2^{(3,r)} &=& -\frac{2+(s_1-9t_1)r^{\frac{1}{3}}}{6}, \\
\end{eqnarray*}
where $s_1$ and $t_1$ are defined in Lemma \ref{periodpolyfac fac N=3}.\\
(b) If $p\equiv2 \pmod3$, then $m\equiv0 \pmod2$.\\
For the case of $m\equiv0 \pmod4$, we have
\begin{eqnarray*}
\eta_0^{(3,r)} &=& -\frac{1+2\sqrt{r}}{3}, \\
\eta_1^{(3,r)} &=& \eta_2^{(3,r)} = -\frac{1-\sqrt{r}}{3}. \\
\end{eqnarray*}
For the case of $m\equiv2 \pmod4$, we have
\begin{eqnarray*}
\eta_0^{(3,r)} &=& -\frac{1-2\sqrt{r}}{3}, \\
\eta_1^{(3,r)} &=& \eta_2^{(3,r)} = -\frac{1+\sqrt{r}}{3}. \\
\end{eqnarray*}
\end{lemma}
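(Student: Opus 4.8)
The plan is to derive Lemma~\ref{lem:gauss period of order 3} directly from the factorization of the period polynomial $\Psi_{(3,r)}(X)$ given in Lemma~\ref{periodpolyfac fac N=3}. The key observation is that, since $\Psi_{(3,r)}(X)=\prod_{i=0}^{2}(X-\eta_i^{(3,r)})$ by definition, the three Gaussian periods are precisely the roots of the period polynomial. So in each case I would read off the roots of the explicit factored form in Lemma~\ref{periodpolyfac fac N=3} and match them against the labels $\eta_0,\eta_1,\eta_2$.

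First, for part~(b) with $p\equiv2\pmod 3$: Lemma~\ref{periodpolyfac fac N=3}(a) gives $\Psi_{(3,r)}(X)=3^{-3}(3X+1+2\sqrt r)(3X+1-\sqrt r)^2$ when $m\equiv0\pmod 4$. Setting each factor to zero yields roots $-\frac{1+2\sqrt r}{3}$ (simple) and $-\frac{1-\sqrt r}{3}$ (double). The double root must be the repeated value among the $\eta_i$; to assign it to $\eta_1=\eta_2$ rather than to $\eta_0$, I would use the standard fact that $\eta_0^{(3,r)}$ is the period attached to the subgroup $\langle\alpha^3\rangle$ containing $1$, and invoke the known congruence/sign normalization (e.g. via $\eta_0+\eta_1+\eta_2=-1$ together with the symmetry $\eta_1=\eta_2$ in the $p\equiv2\pmod3$ case, which forces $3\eta_1=-1-\eta_0$ and pins down which root is $\eta_0$). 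The $m\equiv2\pmod4$ subcase is identical with $\sqrt r$ replaced by $-\sqrt r$.

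Next, for part~(a) with $p\equiv1\pmod 3$ and $m\equiv0\pmod 3$: Lemma~\ref{periodpolyfac fac N=3}(b) gives $\Psi_{(3,r)}(X)=3^{-3}(3X+1-s_1r^{1/3})(3X+1+\tfrac12(s_1+9t_1)r^{1/3})(3X+1+\tfrac12(s_1-9t_1)r^{1/3})$. The three roots are $\frac{s_1r^{1/3}-1}{3}=-\frac{1-s_1r^{1/3}}{3}$, $-\frac{2+(s_1+9t_1)r^{1/3}}{6}$, and $-\frac{2+(s_1-9t_1)r^{1/3}}{6}$ (clearing the factor of $2$ in the denominator turns $3X+1$ into $6X+2$). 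These are exactly the claimed values of $\eta_0,\eta_1,\eta_2$, and the labeling is consistent with the normalization $s_1\equiv1\pmod3$, $\gcd(s_1,p)=1$ that singles out $\eta_0$.

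The main obstacle is not the algebra — extracting roots from a factored polynomial is routine — but justifying the \emph{assignment} of each specific root to the correctly indexed $\eta_i$, since the period polynomial only determines the multiset $\{\eta_0,\eta_1,\eta_2\}$. This requires citing (or recalling from \cite{myerson1981period}) the classical normalization of Gaussian periods of order $3$: in the $p\equiv2\pmod3$ case the Frobenius/quadratic-residue structure forces $\eta_1=\eta_2$ and identifies $\eta_0$ as the "distinguished" period, and in the $p\equiv1\pmod3$ case the sign condition $s_1\equiv1\pmod3$ (the standard normalization in the theory of cubic Gauss sums) fixes the correspondence. I would state this normalization explicitly and then the lemma follows by direct substitution; the congruence conditions $m\equiv0\pmod3$ (resp. $m\equiv0\pmod2$) are inherited verbatim from Lemma~\ref{periodpolyfac fac N=3} together with the hypothesis $N\mid\frac{r-1}{p-1}$.
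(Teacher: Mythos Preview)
Your approach is correct and matches the paper's: the paper simply states that the lemma ``follows immediately from Lemma~\ref{periodpolyfac fac N=3}'' with no further argument, and you have supplied exactly the details implicit in that sentence---reading off the roots of the factored period polynomial. Your attention to the root-labeling issue is actually more careful than the paper, which does not address it at all.
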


The following gives the complete weight enumerator of the code $C_D$ with defining set $D$ constructed from
cyclotomy of order $N=3$.

\begin{theorem}\label{lem:cwe order 32}
Let $N=3$, $r=p^m$ and $N|\frac{r-1}{p-1}$. Let $\#I=2$ and $r-1=nN$. Then the code $C_D$ of \eqref{def:CD} is a $[2n,m]$ linear code over $\mathbb{F}_p$ and its complete weight enumerator is given as follows.\\
(a) If $p\equiv1 \pmod3$ and $m\equiv0 \pmod3$, then
\begin{eqnarray*}
CWE(C_D)&=&w_0^{2n}+nw_0^{\frac{2n}{p}-\frac{p-1}{6p}(4-(s_1-9t_1)r^{\frac{1}{3}})}
\prod_{\rho=1}^{p-1}w_{\rho}^{\frac{2n}{p}+\frac{1}{6p}(4-(s_1-9t_1)r^{\frac{1}{3}})}\\
&&+nw_0^{\frac{2n}{p}-\frac{p-1}{3p}(2+s_1r^{\frac{1}{3}})}
\prod_{\rho=1}^{p-1}w_{\rho}^{\frac{2n}{p}+\frac{1}{3p}(2+s_1r^{\frac{1}{3}})}\\
&&+nw_0^{\frac{2n}{p}-\frac{p-1}{6p}(4-(s_1+9t_1)r^{\frac{1}{3}})}
\prod_{\rho=1}^{p-1}w_{\rho}^{\frac{2n}{p}+\frac{1}{6p}(4-(s_1+9t_1)r^{\frac{1}{3}})}.\\
\end{eqnarray*}
(b) If $p\equiv2 \pmod3$ and $m\equiv0 \pmod4$, then
\begin{eqnarray*}
CWE(C_D)&=&w_0^{2n}+2nw_0^{\frac{2n}{p}-\frac{p-1}{3p}(2+\sqrt{r})}
\prod_{\rho=1}^{p-1}w_{\rho}^{\frac{2n}{p}+\frac{1}{3p}(2+\sqrt{r})}\\
&&+nw_0^{\frac{2n}{p}-\frac{2(p-1)}{3p}(1-\sqrt{r})}
\prod_{\rho=1}^{p-1}w_{\rho}^{\frac{2n}{p}+\frac{2}{3p}(1-\sqrt{r})}.\\
\end{eqnarray*}
(c) If $p\equiv2 \pmod3$ and $m\equiv2 \pmod4$, then
\begin{eqnarray*}
CWE(C_D)&=&w_0^{2n}+2nw_0^{\frac{2n}{p}-\frac{p-1}{3p}(2-\sqrt{r})}
\prod_{\rho=1}^{p-1}w_{\rho}^{\frac{2n}{p}+\frac{1}{3p}(2-\sqrt{r})}\\
&&+nw_0^{\frac{2n}{p}-\frac{2(p-1)}{3p}(1+\sqrt{r})}
\prod_{\rho=1}^{p-1}w_{\rho}^{\frac{2n}{p}+\frac{2}{3p}(1+\sqrt{r})}.\\
\end{eqnarray*}
\end{theorem}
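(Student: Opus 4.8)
The plan is to specialize the general formula \eqref{eq:generalcwe} to the case $N=3$, $\#I=2$, and then substitute the explicit Gaussian period values from Lemma \ref{lem:gauss period of order 3}. First I would fix the defining set. Since $\#I=2$ and $I\subset\{0,1,2\}$, there are three choices, but the key observation is that the three sums $\sum_{i\in I}\eta_{k+i}^{(3,r)}$ appearing in \eqref{eq:generalcwe} as $k$ ranges over $\{0,1,2\}$ depend only on which single index is \emph{omitted} from $I$: writing $I=\{0,1,2\}\setminus\{j\}$ we have $\sum_{i\in I}\eta_{k+i}^{(3,r)}=\bigl(\sum_{t=0}^{2}\eta_t^{(3,r)}\bigr)-\eta_{k+j}^{(3,r)}=-1-\eta_{k+j}^{(3,r)}$, using the standard fact $\sum_{t=0}^{N-1}\eta_t^{(N,r)}=-1$ (which follows from $\sum_{x\in\mathbb{F}_r^*}\zeta_p^{\mathrm{Tr}(x)}=-1$). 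As $k$ runs over all residues mod $3$, $k+j$ also runs over all residues mod $3$, so the multiset of exponents is independent of $j$; hence $C_D$ and its complete weight enumerator do not depend on the particular choice of $I$ with $\#I=2$, and it suffices to treat $I=\{0,1\}$, say.

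Next I would record that the length is $nl=2n$ and the dimension claim: the code has dimension $m$ because the map $x\mapsto(\mathrm{Tr}(xd))_{d\in D}$ is $\mathbb{F}_p$-linear with trivial kernel — if $\mathrm{Tr}(xd)=0$ for all $d\in D$ then, since $D$ spans $\mathbb{F}_r$ over $\mathbb{F}_p$ (each cyclotomic class of order $3$ with $3\mid\frac{r-1}{p-1}$ already generates $\mathbb{F}_r$ as an $\mathbb{F}_p$-vector space, a short argument via the nondegeneracy of the trace form plus the fact that $D$ is closed under multiplication by $\mathbb{F}_p^*$), nondegeneracy of $\mathrm{Tr}$ forces $x=0$. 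Then I would plug the three exponent values $-1-\eta_0^{(3,r)}$, $-1-\eta_1^{(3,r)}$, $-1-\eta_2^{(3,r)}$ into \eqref{eq:generalcwe}, one per value of $k$, obtaining three non-zero terms each of the stated shape $w_0^{\,2n/p+\frac{p-1}{p}S}\prod_{\rho=1}^{p-1}w_\rho^{\,2n/p-\frac{1}{p}S}$ with $S=-1-\eta_{k}^{(3,r)}$, each with coefficient $n$.

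The remaining work is purely computational: insert the values of $\eta_i^{(3,r)}$ from Lemma \ref{lem:gauss period of order 3} in each of the three cases. For part (a), $p\equiv1\pmod 3$, one gets $S=-1-\eta_j^{(3,r)}$ equal to $-\frac{1}{3}(2+s_1r^{1/3})$, $-\frac{1}{6}(4-(s_1+9t_1)r^{1/3})$ and $-\frac{1}{6}(4-(s_1-9t_1)r^{1/3})$ respectively, after simplifying $-1-\bigl(-\frac{1-s_1r^{1/3}}{3}\bigr)=-\frac{2+s_1r^{1/3}}{3}$ and similarly for the other two; substituting these into the exponents of $w_0$ (multiply by $\frac{p-1}{p}$) and of $w_\rho$ (multiply by $-\frac1p$) yields exactly the displayed formula. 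For parts (b) and (c), $p\equiv2\pmod 3$, two of the three Gaussian periods coincide ($\eta_1^{(3,r)}=\eta_2^{(3,r)}$), so two of the three terms merge into a single term with coefficient $2n$, giving the two-term form shown; the $\pm\sqrt r$ split between (b) and (c) is just the $m\equiv0$ versus $m\equiv2\pmod 4$ dichotomy from Lemma \ref{lem:gauss period of order 3}. I do not expect any genuine obstacle here — the only points requiring a sentence of justification rather than mechanical algebra are the independence of the code from the choice of $I$ (via $\sum_t\eta_t^{(3,r)}=-1$) and the dimension being exactly $m$ (via nondegeneracy of the trace and $D$ spanning $\mathbb{F}_r$); everything else is bookkeeping of signs and factors of $3$ and $6$.
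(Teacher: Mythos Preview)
Your proposal is correct and follows essentially the same approach as the paper: both specialize the general formula \eqref{eq:generalcwe} to $N=3$, $\#I=2$, compute the three sums $\sum_{i\in I}\eta_{k+i}^{(3,r)}$ using Lemma~\ref{lem:gauss period of order 3}, and substitute. The only differences are cosmetic: the paper computes each pairwise sum $\eta_i^{(3,r)}+\eta_j^{(3,r)}$ directly, whereas you use the identity $\sum_{t=0}^{2}\eta_t^{(3,r)}=-1$ to write the relevant sum as $-1-\eta_{k+j}^{(3,r)}$, which is a small shortcut that also gives the independence from the choice of $I$ for free; and you supply a sketch for the dimension being exactly $m$, which the paper simply asserts.
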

\begin{proof}
It follows from Lemma \ref{lem:gauss period of order 3} that the following assertions hold.\\
(a) If $p\equiv1 \pmod3$ and $m\equiv0 \pmod3$, then
\begin{eqnarray*}
\eta_0^{(3,r)}+\eta_1^{(3,r)} &=& -\frac{4-(s_1-9t_1)r^{\frac{1}{3}}}{6}, \\
\eta_1^{(3,r)}+\eta_2^{(3,r)} &=& -\frac{2+s_1r^{\frac{1}{3}}}{3}, \\
\eta_2^{(3,r)}+\eta_0^{(3,r)} &=& -\frac{4-(s_1+9t_1)r^{\frac{1}{3}}}{6}, \\
\end{eqnarray*}
where $s_1$ and $t_1$ are defined in Lemma \ref{periodpolyfac fac N=3}.\\
(b) If $p\equiv2 \pmod3$ and $m\equiv0 \pmod4$, then
\begin{eqnarray*}
\eta_0^{(3,r)}+\eta_1^{(3,r)} &=& -\frac{2+\sqrt{r}}{3}=\eta_2^{(3,r)}+\eta_0^{(3,r)}, \\
\eta_1^{(3,r)}+\eta_2^{(3,r)} &=& -\frac{2(1-\sqrt{r})}{3}.\\
\end{eqnarray*}
(c) If $p\equiv2 \pmod3$ and $m\equiv2 \pmod4$, then
\begin{eqnarray*}
\eta_0^{(3,r)}+\eta_1^{(3,r)} &=& -\frac{2-\sqrt{r}}{3}=\eta_2^{(3,r)}+\eta_0^{(3,r)}, \\
\eta_1^{(3,r)}+\eta_2^{(3,r)} &=& -\frac{2(1+\sqrt{r})}{3}.\\
\end{eqnarray*}
The desired conclusions then follow from Equation \eqref{eq:generalcwe}.
\end{proof}

When $\#I=1$, the result is straightforward from Lemma \ref{lem:gauss period of order 3} and Equation \eqref{eq:generalcwe}.
\begin{theorem}\label{lem:cwe order 31}
Let $N=3$, $r=p^m$ and $N|\frac{r-1}{p-1}$. Let $\#I=1$ and $r-1=nN$. Then the code $C_D$ of \eqref{def:CD} is an $[n,m]$ cyclic code over $\mathbb{F}_p$ and its complete weight enumerator is given as follows.\\
(a) If $p\equiv1 \pmod3$ and $m\equiv0 \pmod3$, then
\begin{eqnarray*}
CWE(C_D)&=&w_0^n+nw_0^{\frac{n}{p}-\frac{p-1}{3p}(1-s_1r^{\frac{1}{3}})}
\prod_{\rho=1}^{p-1}w_{\rho}^{\frac{n}{p}+\frac{1}{3p}(1-s_1r^{\frac{1}{3}})}\\
&&+nw_0^{\frac{n}{p}-\frac{p-1}{6p}(2+(s_1+9t_1)r^{\frac{1}{3}})}
\prod_{\rho=1}^{p-1}w_{\rho}^{\frac{n}{p}+\frac{1}{6p}(2+(s_1+9t_1)r^{\frac{1}{3}})}\\
&&+nw_0^{\frac{n}{p}-\frac{p-1}{6p}(2+(s_1-9t_1)r^{\frac{1}{3}})}
\prod_{\rho=1}^{p-1}w_{\rho}^{\frac{n}{p}+\frac{1}{6p}(2+(s_1-9t_1)r^{\frac{1}{3}})},\\
\end{eqnarray*}
where $s_1$ and $t_1$ are defined in Lemma \ref{periodpolyfac fac N=3}.\\
(b) If $p\equiv2 \pmod3$ and $m\equiv0 \pmod4$, then
\begin{eqnarray*}
CWE(C_D)&=&w_0^n+2nw_0^{\frac{n}{p}-\frac{p-1}{3p}(1-\sqrt{r})}
\prod_{\rho=1}^{p-1}w_{\rho}^{\frac{n}{p}+\frac{1}{3p}(1-\sqrt{r})}\\
&&+nw_0^{\frac{n}{p}-\frac{p-1}{3p}(1+2\sqrt{r})}
\prod_{\rho=1}^{p-1}w_{\rho}^{\frac{n}{p}+\frac{1}{3p}(1+2\sqrt{r})}.\\
\end{eqnarray*}
(c) If $p\equiv2 \pmod3$ and $m\equiv2 \pmod4$, then
\begin{eqnarray*}
CWE(C_D)&=&w_0^n+2nw_0^{\frac{n}{p}-\frac{p-1}{3p}(1+\sqrt{r})}
\prod_{\rho=1}^{p-1}w_{\rho}^{\frac{n}{p}+\frac{1}{3p}(1+\sqrt{r})}\\
&&+nw_0^{\frac{n}{p}-\frac{p-1}{3p}(1-2\sqrt{r})}
\prod_{\rho=1}^{p-1}w_{\rho}^{\frac{n}{p}+\frac{1}{3p}(1-2\sqrt{r})}.\\
\end{eqnarray*}
\end{theorem}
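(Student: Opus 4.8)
The plan is to specialize the general complete-weight-enumerator formula \eqref{eq:generalcwe} to the case $N=3$, $l=\#I=1$, using the explicit values of the Gaussian periods $\eta_i^{(3,r)}$ recorded in Lemma \ref{lem:gauss period of order 3}. Since $l=1$, write $I=\{i_0\}$ for a single index $i_0\in\{0,1,2\}$. Then \eqref{eq:generalcwe} collapses to
\begin{equation*}
CWE(C_D)=w_0^{n}+n\sum_{k=0}^{2}
w_0^{\frac{n}{p}+\frac{p-1}{p}\eta_{k+i_0}^{(3,r)}}
\prod_{\rho=1}^{p-1}w_{\rho}^{\frac{n}{p}-\frac{1}{p}\eta_{k+i_0}^{(3,r)}},
\end{equation*}
and as $k$ runs over $\{0,1,2\}$ the subscript $k+i_0$ runs (modulo $3$) over all of $\{0,1,2\}$ as well, since $\eta_j^{(3,r)}=\eta_{j\bmod 3}^{(3,r)}$. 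Hence the sum is independent of the choice of $i_0$, which is why the statement only depends on $\#I=1$ and not on which cyclotomic class is chosen; this justifies the first sentence of the theorem.

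Next I would substitute the closed forms from Lemma \ref{lem:gauss period of order 3} into the three summands, according to the two congruence cases on $p$ modulo $3$. In case (a), $p\equiv1\pmod3$ forces $m\equiv0\pmod 3$, and the three periods $\eta_0^{(3,r)}$, $\eta_1^{(3,r)}$, $\eta_2^{(3,r)}$ are the three pairwise distinct values $-\tfrac{1-s_1r^{1/3}}{3}$, $-\tfrac{2+(s_1+9t_1)r^{1/3}}{6}$, $-\tfrac{2+(s_1-9t_1)r^{1/3}}{6}$; plugging $\tfrac{p-1}{p}\eta_j$ into the $w_0$-exponent and $-\tfrac1p\eta_j$ into the $w_\rho$-exponent and simplifying the signs gives exactly the three terms displayed in part (a). In cases (b) and (c), $p\equiv2\pmod3$ forces $m$ even, and $\eta_1^{(3,r)}=\eta_2^{(3,r)}$, so two of the three summands coincide and combine into the coefficient $2n$ while the third (coming from $\eta_0^{(3,r)}$) contributes the coefficient $n$; substituting the values $-\tfrac{1+2\sqrt r}{3}$ and $-\tfrac{1-\sqrt r}{3}$ (resp. $-\tfrac{1-2\sqrt r}{3}$ and $-\tfrac{1+\sqrt r}{3}$) yields parts (b) and (c).

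For the length and dimension claim: the length is $nl=n$ by construction, and the code is cyclic because $D=C_{i_0}^{(3,r)}$ is a single cyclotomic coset, which (after identifying coordinates suitably) makes $C_D$ a cyclic code; alternatively one notes $\mathbb{F}_p^*\subseteq C_0^{(3,r)}$ so the map $x\mapsto(\mathrm{Tr}(xd))_{d\in D}$ has trivial kernel on $\mathbb{F}_r$, giving dimension exactly $m$ (equivalently, the listed exponents are never all equal to $n/p$, so no nonzero $x$ yields the zero codeword). This last point — verifying injectivity, i.e. that no nonzero weight collapses and the dimension is full $m$ rather than merely ``at most $m$'' — is the only place requiring a small argument beyond routine substitution; it follows because the Gaussian periods in each case are not all equal to $-n/(p-1)$ would-be degenerate value, equivalently because $\sum_{d\in D}\zeta_p^{\mathrm{Tr}(xd)}$ is not constant in $x$, which is immediate from the explicit nonzero $r^{1/3}$ or $\sqrt r$ terms. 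Everything else is bookkeeping, so the main (mild) obstacle is just organizing the sign simplifications cleanly across the three cases; I would present case (a) in full and indicate that (b) and (c) are entirely analogous.
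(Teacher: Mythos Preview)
Your proposal is correct and follows exactly the approach the paper indicates: the paper offers no detailed proof for this theorem, stating only that ``the result is straightforward from Lemma~\ref{lem:gauss period of order 3} and Equation~\eqref{eq:generalcwe},'' which is precisely the specialization-and-substitution you carry out. Your treatment is in fact more thorough than the paper's, since you also address the dimension and cyclic-code claims (which the paper simply asserts); one small imprecision is that the phrase ``the listed exponents are never all equal to $n/p$'' is not quite the right condition for a nonzero $x$ to map to the zero codeword---what you need is that no $\eta_k^{(3,r)}$ equals $n$, which is clear from the explicit values in Lemma~\ref{lem:gauss period of order 3}.
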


\begin{remark}
It should be remarked that the authors in~\cite{ding2007cyclotomic} determined the weight enumerators of two classes of cyclotomic linear codes $C_{\bar{D}}$ over $\mathbb{F}_p$ of order $3$ with defining sets $\bar{D}=D_0$ and $\bar{D}=D_0\bigcup D_1$, where $D_j=\{\alpha^j\alpha^{3i}:i=0,1,\cdots,\frac{r-1}{3(p-1)}-1\}$ for $j=0,1$.

Note that $C_0^{(3,r)}=\{xy:x\in\mathbb{F}_{p}^* ~~\mathrm{and}~~y\in D_0\}=(\mathbb{F}_{p}^*)D_0$. By Theorems \ref{lem:cwe order 32} and \ref{lem:cwe order 31}, we can obtain the weight enumerator of $C_D$ for $D=\bigcup_{i\in I}C_i^{(3,r)}$ with $I\subset\{0,1,2\}$. Dividing each nonzero weight of $C_D$ with $p-1$ yields the weight enumerator of $C_{\bar{D}}$, which conforms to the results described in~\cite{ding2007cyclotomic}. Thus we generalize the results of~\cite{ding2007cyclotomic} to some extent.
\end{remark}

\begin{example}
Let $(p,m)=(2,6)$ and $N=3$. Then $r=64$ and $n=21$.
Suppose that $\alpha$ is a primitive element of $\mathbb{F}_{64}$.\\

(1) For the case of $D=C_1^{(3,r)}=\alpha\langle\alpha^3\rangle$.

By Theorem \ref{lem:cwe order 31}, the code $C_{D}$ of \eqref{def:CD} is a binary $[21,6,8]$ cyclic code and its complete weight enumerator is
\begin{eqnarray*}
CWE(C_{D})=w_0^{21}+21w_0^{13}w_1^{8}+42w_0^{9}w_1^{12},
\end{eqnarray*}
which is consistent with numerical computation by Magma.

This code is the best binary cyclic code of length $21$ and dimension $6$ according to the tables given by Ding~\cite{ding2014differencesets}.

(2) For the case of $D=C_0^{(3,r)}\bigcup C_1^{(3,r)}= \langle\alpha^3\rangle\bigcup \alpha\langle\alpha^3\rangle$.

By Theorem \ref{lem:cwe order 32}, the code $C_{D}$ of \eqref{def:CD} is a binary $[42,6,20]$ cyclic code which is optimal with respect to
the Griesmer bound, and its complete weight enumerator is
\begin{eqnarray*}
CWE(C_{D})=w_0^{42}+42w_0^{22}w_1^{20}+21w_0^{18}w_1^{24},
\end{eqnarray*}
which is consistent with numerical computation by Magma.\\

\end{example}

\subsection{Linear codes from cyclotomy of order N=4}\label{subsec:N=4}

In this subsection, we will present the complete weight enumerator of the code $C_D$ with defining set $D$ constructed from
cyclotomy of order $N=4$.

The results on Gaussian periods of order $N=4$ follow immediately from Lemma \ref{periodpoly fac N=4}.
\begin{lemma}\label{lem:gauss period of order 4}
Let $N=4$, $r=p^m$ and $N|\frac{r-1}{p-1}$. We have the following results on Gaussian periods.\\
(a) If $p\equiv1 \pmod4$, then $m\equiv0 \pmod4$ and
\begin{eqnarray*}
\eta_0^{(4,r)} &=& -\frac{1+\sqrt{r}+2r^{\frac{1}{4}}u_1}{4}, \\
\eta_1^{(4,r)} &=& -\frac{1-\sqrt{r}+4r^{\frac{1}{4}}v_1}{4}, \\
\eta_2^{(4,r)} &=& -\frac{1+\sqrt{r}-2r^{\frac{1}{4}}u_1}{4}, \\
\eta_3^{(4,r)} &=& -\frac{1-\sqrt{r}-4r^{\frac{1}{4}}v_1}{4}, \\
\end{eqnarray*}
where $u_1$ and $v_1$ are defined in Lemma \ref{periodpoly fac N=4}.\\
(b) If $p\equiv3 \pmod4$, then $m\equiv0 \pmod2$.\\
For the case of $m\equiv0 \pmod4$, we have
\begin{eqnarray*}
\eta_0^{(4,r)} &=& -\frac{1+3\sqrt{r}}{4}, \\
\eta_j^{(4,r)} &=& -\frac{1-\sqrt{r}}{4} ~~for~~ all~~ j\neq0. \\
\end{eqnarray*}
For the case of $m\equiv2 \pmod4$, we have
\begin{eqnarray*}
\eta_0^{(4,r)} &=& -\frac{1-3\sqrt{r}}{4}, \\
\eta_j^{(4,r)} &=& -\frac{1+\sqrt{r}}{4} ~~for~~ all~~ j\neq0. \\
\end{eqnarray*}
\end{lemma}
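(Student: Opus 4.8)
The plan is to recover the Gaussian periods $\eta_0^{(4,r)},\dots,\eta_3^{(4,r)}$ as the four roots, counted with multiplicity, of the period polynomial $\Psi_{(4,r)}(X)=\prod_{i=0}^{3}\bigl(X-\eta_i^{(4,r)}\bigr)$, whose explicit factorizations are furnished by Lemma~\ref{periodpoly fac N=4}. Reading the roots off the factored forms is immediate; the actual content is attaching the correct subscript to each root, since Vieta's formulas only return the \emph{multiset} $\{\eta_0^{(4,r)},\dots,\eta_3^{(4,r)}\}$.

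I would handle case (b), $p\equiv3\pmod4$, first. Here $m$ is even and Lemma~\ref{periodpoly fac N=4} exhibits one simple root $a$ and one root $b$ of multiplicity three, so exactly three periods coincide. The Frobenius automorphism $\sigma\colon x\mapsto x^{p}$ of $\mathbb{F}_r$ fixes $\mathrm{Tr}$, and since $\gcd(4p,r-1)=4$ and $p\equiv-1\pmod4$ it acts on cyclotomic classes by $\sigma\bigl(C_i^{(4,r)}\bigr)=C_{-i\bmod4}^{(4,r)}$; hence $\eta_i^{(4,r)}=\eta_{-i}^{(4,r)}$, so in particular $\eta_1^{(4,r)}=\eta_3^{(4,r)}$. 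This common value cannot be the simple root $a$, so $\eta_1^{(4,r)}=\eta_3^{(4,r)}=b$, and the two remaining periods $\eta_0^{(4,r)},\eta_2^{(4,r)}$ realize the pair $\{a,b\}$. To decide that $\eta_0^{(4,r)}=a$ I would compute $\eta_0^{(4,r)}-\eta_2^{(4,r)}$: expressing it as a sum of a quartic character $\psi$ over the squares of $\mathbb{F}_r$ shows it equals $\tfrac12\bigl(G(\psi)+\overline{G(\psi)}\bigr)=\operatorname{Re}G(\psi)$, and the semiprimitive evaluation of the Gauss sum $G(\psi)$ (Stickelberger / Davenport--Hasse) fixes its sign; equivalently, one cross-checks against the quadratic Gaussian period $\eta_0^{(2,r)}=\eta_0^{(4,r)}+\eta_2^{(4,r)}$, which equals $-\tfrac{1+\sqrt r}{2}$ for $m\equiv0\pmod4$ and $-\tfrac{1-\sqrt r}{2}$ for $m\equiv2\pmod4$. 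Plugging the two congruence classes of $m$ into the factored polynomial then yields the four displayed values.

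In case (a), $p\equiv1\pmod4$ (so $m\equiv0\pmod4$), Lemma~\ref{periodpoly fac N=4} produces four \emph{distinct} roots, so the whole task is the assignment. I would write $\eta_i^{(4,r)}=\tfrac12\eta_i^{(2,r)}+\tfrac12\bigl(\eta_i^{(4,r)}-\eta_{i+2}^{(4,r)}\bigr)$. The sums $\eta_0^{(4,r)}+\eta_2^{(4,r)}$ and $\eta_1^{(4,r)}+\eta_3^{(4,r)}$ are the two quadratic periods $-\tfrac{1+\sqrt r}{2}$ and $-\tfrac{1-\sqrt r}{2}$, which separates the pair $\{\eta_0^{(4,r)},\eta_2^{(4,r)}\}$ (quartic residues and quadratic-but-not-quartic residues) from $\{\eta_1^{(4,r)},\eta_3^{(4,r)}\}$; inside each pair the difference is again the real part of a quartic Gauss sum, and the classical evaluation together with the normalizations $r^{1/2}=u_1^2+4v_1^2$, $u_1\equiv1\pmod4$, $\gcd(u_1,p)=1$ of Lemma~\ref{periodpoly fac N=4} pins these differences down to $\pm2r^{1/4}u_1$ and $\pm4r^{1/4}v_1$. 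Combining sum and difference in each pair reproduces the stated formulas.

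The step I expect to be the real obstacle is exactly this subscript-and-sign bookkeeping: the factorization in Lemma~\ref{periodpoly fac N=4} is symmetric in the four periods, so deciding which root is $\eta_0^{(4,r)}$ and fixing the signs of $u_1,v_1$ needs genuine extra input---the Galois symmetry in case (b) and an explicit Gauss-sum / Davenport--Hasse computation in both cases. Once that is in place, the lemma is a routine substitution of the two congruence classes of $m$ into the factored period polynomial.
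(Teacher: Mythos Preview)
The paper does not actually supply a proof of this lemma: it is introduced with the single sentence ``The results on Gaussian periods of order $N=4$ follow immediately from Lemma~\ref{periodpoly fac N=4},'' and the specific values (including the subscript assignment) are tacitly imported from Myerson's paper~\cite{myerson1981period}, which is the source of Lemma~\ref{periodpoly fac N=4}. So there is nothing to compare against beyond that one-line citation.

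Your proposal is correct and goes well beyond what the paper does. You rightly observe that the factorization of $\Psi_{(4,r)}(X)$ in Lemma~\ref{periodpoly fac N=4} only determines the \emph{multiset} $\{\eta_0^{(4,r)},\dots,\eta_3^{(4,r)}\}$, and that attaching the correct subscript to each root requires genuine additional input. Your mechanism for this---the Frobenius symmetry $\eta_i^{(4,r)}=\eta_{-i}^{(4,r)}$ when $p\equiv3\pmod4$, together with the known quadratic periods $\eta_0^{(2,r)}=\eta_0^{(4,r)}+\eta_2^{(4,r)}$ and $\eta_1^{(2,r)}=\eta_1^{(4,r)}+\eta_3^{(4,r)}$, and a quartic Gauss-sum evaluation to fix the remaining sign---is exactly the standard route and would produce a self-contained proof. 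What this buys you over the paper's treatment is an honest derivation rather than a pointer to the literature; what the paper's one-liner buys is brevity, since for the downstream CWE computations in Theorems~\ref{lem:cwe order 43}--\ref{lem:cwe order 41} only the multiset of values of $\sum_{i\in I}\eta_{k+i}^{(4,r)}$ as $k$ ranges over $\{0,1,2,3\}$ is needed, and this is invariant under cyclic relabeling of the periods.
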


The following gives the complete weight enumerator of the code $C_D$ with defining set $D$ constructed from
cyclotomy of order $N=4$.

\begin{theorem}\label{lem:cwe order 43}
Let $N=4$, $r=p^m$ and $N|\frac{r-1}{p-1}$. Let $\#I=3$ and $r-1=nN$. Then the code $C_D$ of \eqref{def:CD} is a $[3n,m]$ linear code over $\mathbb{F}_p$ and its complete weight enumerator is given as follows.\\
(a) If $p\equiv1 \pmod4$ and $m\equiv0 \pmod4$, then
\begin{eqnarray*}
CWE(C_D)&=&w_0^{3n}+nw_0^{\frac{3n}{p}-\frac{p-1}{4p}(3-\sqrt{r}+2r^{\frac{1}{4}}u_1)}
\prod_{\rho=1}^{p-1}w_{\rho}^{\frac{3n}{p}+\frac{1}{4p}(3-\sqrt{r}+2r^{\frac{1}{4}}u_1)}\\
&&+nw_0^{\frac{3n}{p}-\frac{p-1}{4p}(3-\sqrt{r}-2r^{\frac{1}{4}}u_1)}
\prod_{\rho=1}^{p-1}w_{\rho}^{\frac{3n}{p}+\frac{1}{4p}(3-\sqrt{r}-2r^{\frac{1}{4}}u_1)}\\
&&+nw_0^{\frac{3n}{p}-\frac{p-1}{4p}(3+\sqrt{r}+4r^{\frac{1}{4}}v_1)}
\prod_{\rho=1}^{p-1}w_{\rho}^{\frac{3n}{p}+\frac{1}{4p}(3+\sqrt{r}+4r^{\frac{1}{4}}v_1)}\\
&&+nw_0^{\frac{3n}{p}-\frac{p-1}{4p}(3+\sqrt{r}-4r^{\frac{1}{4}}v_1)}
\prod_{\rho=1}^{p-1}w_{\rho}^{\frac{3n}{p}+\frac{1}{4p}(3+\sqrt{r}-4r^{\frac{1}{4}}v_1)},\\
\end{eqnarray*}
where $u_1$ and $v_1$ are defined in Lemma \ref{periodpoly fac N=4}.\\
(b) If $p\equiv3 \pmod4$ and $m\equiv0 \pmod4$, then
\begin{eqnarray*}
CWE(C_D)&=&w_0^{3n}+3nw_0^{\frac{3n}{p}-\frac{p-1}{4p}(3+\sqrt{r})}
\prod_{\rho=1}^{p-1}w_{\rho}^{\frac{3n}{p}+\frac{1}{4p}(3+\sqrt{r})}\\
&&+nw_0^{\frac{3n}{p}-\frac{3(p-1)}{4p}(1-\sqrt{r})}
\prod_{\rho=1}^{p-1}w_{\rho}^{\frac{3n}{p}+\frac{3}{4p}(1-\sqrt{r})}.\\
\end{eqnarray*}
(c) If $p\equiv3 \pmod4$ and $m\equiv2 \pmod4$, then
\begin{eqnarray*}
CWE(C_D)&=&w_0^{3n}+3nw_0^{\frac{3n}{p}-\frac{p-1}{4p}(3-\sqrt{r})}
\prod_{\rho=1}^{p-1}w_{\rho}^{\frac{3n}{p}+\frac{1}{4p}(3-\sqrt{r})}\\
&&+nw_0^{\frac{3n}{p}-\frac{3(p-1)}{4p}(1+\sqrt{r})}
\prod_{\rho=1}^{p-1}w_{\rho}^{\frac{3n}{p}+\frac{3}{4p}(1+\sqrt{r})}.\\
\end{eqnarray*}
\end{theorem}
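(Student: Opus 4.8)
The plan is to run exactly the computation behind Theorem~\ref{lem:cwe order 32}, now with the order-$4$ Gaussian periods of Lemma~\ref{lem:gauss period of order 4} fed into the master formula~\eqref{eq:generalcwe}. With $N=4$ and $l=\#I=3$, Equation~\eqref{eq:generalcwe} becomes
\begin{equation*}
CWE(C_D)=w_0^{3n}+n\sum_{k=0}^{3}w_0^{\frac{3n}{p}+\frac{p-1}{p}S_k}\prod_{\rho=1}^{p-1}w_{\rho}^{\frac{3n}{p}-\frac{1}{p}S_k},\qquad\text{where }S_k:=\sum_{i\in I}\eta_{k+i}^{(4,r)},
\end{equation*}
so the whole problem reduces to evaluating the four numbers $S_0,S_1,S_2,S_3$ and collecting equal terms.

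The key reduction is the following observation. Since $\#I=3$, write $I=\{0,1,2,3\}\setminus\{j_0\}$ for the unique omitted index $j_0$; then $S_k=\sum_{i=0}^{3}\eta_{k+i}^{(4,r)}-\eta_{k+j_0}^{(4,r)}$. The first sum equals $\sum_{x\in\mathbb{F}_r^*}\zeta_p^{\mathrm{Tr}(x)}=-1$, because the complete additive character sum over $\mathbb{F}_r$ vanishes, so $S_k=-1-\eta_{k+j_0}^{(4,r)}$. As $k$ runs through $\{0,1,2,3\}$, so does $k+j_0\bmod 4$; hence the multiset $\{S_k:k=0,1,2,3\}$ equals $\{-1-\eta_t^{(4,r)}:t=0,1,2,3\}$, independently of which cyclotomic class is left out of $D$. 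This is precisely why the enumerator in the statement does not depend on the particular choice of $I$ with $\#I=3$.

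It then remains to substitute the three parameter regimes of Lemma~\ref{lem:gauss period of order 4}. In case (a) the four numbers $-1-\eta_t^{(4,r)}$ are pairwise distinct, and a short simplification — for instance $-1-\eta_0^{(4,r)}=-\tfrac14(3-\sqrt r-2r^{1/4}u_1)$, and likewise for $t=1,2,3$ — turns the sum into the four-term expression displayed. In cases (b) and (c) three of the four order-$4$ periods coincide, namely $\eta_1^{(4,r)}=\eta_2^{(4,r)}=\eta_3^{(4,r)}$, so three of the four summands merge into a single term with coefficient $3n$ while the summand coming from $\eta_0^{(4,r)}$ keeps coefficient $n$; inserting the corresponding values (e.g.\ $-1-\eta_0^{(4,r)}=-\tfrac34(1-\sqrt r)$ when $m\equiv0\pmod4$, and $-\tfrac34(1+\sqrt r)$ when $m\equiv2\pmod4$) gives the two-term shapes of (b) and (c).

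Finally, the claimed parameters $[3n,m]$ are checked in the standard way: the length is $\#D=3n$ by construction, and the dimension is exactly $m$ because the $\mathbb{F}_p$-linear evaluation map $x\mapsto(\mathrm{Tr}(xd))_{d\in D}$ is injective — a nonzero $x$ in its kernel would force $D$ into the hyperplane $\{z\in\mathbb{F}_r:\mathrm{Tr}(xz)=0\}$, which has only $r/p-1$ nonzero points, fewer than $\#D=3(r-1)/4$ for every odd prime $p$ occurring here. I expect the only real difficulty to be bookkeeping: tracking the multiplicities $n$ versus $3n$ when the Gaussian periods collapse in (b) and (c), and keeping signs straight while clearing the denominator $4$ out of the exponents. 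Everything else is a direct substitution into~\eqref{eq:generalcwe}.
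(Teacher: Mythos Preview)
Your proof is correct and follows essentially the same route as the paper: compute the sums $\sum_{i\in I}\eta_{k+i}^{(4,r)}$ from the values in Lemma~\ref{lem:gauss period of order 4} and substitute into~\eqref{eq:generalcwe}. Your complement identity $S_k=-1-\eta_{k+j_0}^{(4,r)}$ is a slightly cleaner shortcut than the paper's direct addition of the four cyclic triples (and it makes transparent the independence of the choice of $I$), and your injectivity argument for the dimension $m$ is a detail the paper leaves implicit.
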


\begin{proof}
From Lemma \ref{lem:gauss period of order 4}, we can deduce the following assertions.\\
(a) If $p\equiv1 \pmod4$ and $m\equiv0 \pmod4$, then
\begin{eqnarray*}
\eta_0^{(4,r)}+\eta_1^{(4,r)}+\eta_2^{(4,r)} &=&
            -\frac{3+\sqrt{r}+4r^{\frac{1}{4}}v_1}{4}, \\
\eta_1^{(4,r)}+\eta_2^{(4,r)}+\eta_3^{(4,r)} &=&
            -\frac{3-\sqrt{r}-2r^{\frac{1}{4}}u_1}{4}, \\
\eta_2^{(4,r)}+\eta_3^{(4,r)}+\eta_0^{(4,r)} &=&
            -\frac{3+\sqrt{r}-4r^{\frac{1}{4}}v_1}{4}, \\
\eta_3^{(4,r)}+\eta_0^{(4,r)}+\eta_1^{(4,r)} &=&
            -\frac{3-\sqrt{r}+2r^{\frac{1}{4}}u_1}{4}, \\
\end{eqnarray*}
where $u_1$ and $v_1$ are defined in Lemma \ref{periodpoly fac N=4}.\\
(b) If $p\equiv3 \pmod4$ and $m\equiv0 \pmod4$, then
\begin{eqnarray*}
\eta_0^{(4,r)}+\eta_1^{(4,r)}+\eta_2^{(4,r)}
&=&\eta_2^{(4,r)}+\eta_3^{(4,r)}+\eta_0^{(4,r)}\\
&=&\eta_3^{(4,r)}+\eta_0^{(4,r)}+\eta_1^{(4,r)}\\
&=&  -\frac{3+\sqrt{r}}{4}, \\
\eta_1^{(4,r)}+\eta_2^{(4,r)}+\eta_3^{(4,r)} &=& -\frac{3(1-\sqrt{r})}{4}. \\
\end{eqnarray*}
(c) If $p\equiv3 \pmod4$ and $m\equiv2 \pmod4$, then
\begin{eqnarray*}
\eta_0^{(4,r)}+\eta_1^{(4,r)}+\eta_2^{(4,r)}
&=&\eta_2^{(4,r)}+\eta_3^{(4,r)}+\eta_0^{(4,r)}\\
&=&\eta_3^{(4,r)}+\eta_0^{(4,r)}+\eta_1^{(4,r)}\\
&=&  -\frac{3-\sqrt{r}}{4}, \\
\eta_1^{(4,r)}+\eta_2^{(4,r)}+\eta_3^{(4,r)} &=& -\frac{3(1+\sqrt{r})}{4}. \\
\end{eqnarray*}
The desired conclusions then follow from Equation \eqref{eq:generalcwe}.
\end{proof}

\begin{theorem}\label{lem:cwe order 42}
Let $N=4$, $r=p^m$ and $N|\frac{r-1}{p-1}$. Let $\#I=2$ and $r-1=nN$. Then the code $C_D$ of \eqref{def:CD} is a $[2n,m]$ linear code over $\mathbb{F}_p$ and its complete weight enumerator is given as follows.\\
(a) If $p\equiv1 \pmod4$ and $m\equiv0 \pmod4$, then
\begin{eqnarray*}
CWE(C_D)&=&w_0^{2n}+2nw_0^{\frac{2n}{p}-\frac{p-1}{2p}(1+\sqrt{r})}
\prod_{\rho=1}^{p-1}w_{\rho}^{\frac{2n}{p}+\frac{1}{2p}(1+\sqrt{r})}\\
&&+2nw_0^{\frac{2n}{p}-\frac{p-1}{2p}(1-\sqrt{r})}
\prod_{\rho=1}^{p-1}w_{\rho}^{\frac{2n}{p}+\frac{1}{2p}(1-\sqrt{r})},\\
\end{eqnarray*}
for the case of $I=\{0,2\}$ and $I=\{1,3\}$, and for other cases,
\begin{eqnarray*}
CWE(C_D)&=&w_0^{2n}+nw_0^{\frac{2n}{p}-\frac{p-1}{2p}(1+r^{\frac{1}{4}}(u_1+2v_1))}
\prod_{\rho=1}^{p-1}w_{\rho}^{\frac{2n}{p}+\frac{1}{2p}(1+r^{\frac{1}{4}}(u_1+2v_1))}\\
&&+nw_0^{\frac{2n}{p}-\frac{p-1}{2p}(1-r^{\frac{1}{4}}(u_1+2v_1))}
\prod_{\rho=1}^{p-1}w_{\rho}^{\frac{2n}{p}+\frac{1}{2p}(1-r^{\frac{1}{4}}(u_1+2v_1))}\\
&&+nw_0^{\frac{2n}{p}-\frac{p-1}{2p}(1+r^{\frac{1}{4}}(u_1-2v_1))}
\prod_{\rho=1}^{p-1}w_{\rho}^{\frac{2n}{p}+\frac{1}{2p}(1+r^{\frac{1}{4}}(u_1-2v_1))}\\
&&+nw_0^{\frac{2n}{p}-\frac{p-1}{2p}(1-r^{\frac{1}{4}}(u_1-2v_1))}
\prod_{\rho=1}^{p-1}w_{\rho}^{\frac{2n}{p}+\frac{1}{2p}(1-r^{\frac{1}{4}}(u_1-2v_1))},\\
\end{eqnarray*}
where $u_1$ and $v_1$ are defined in Lemma \ref{periodpoly fac N=4}.\\
(b) If $p\equiv3 \pmod4$ and $m\equiv0 \pmod2$, then
\begin{eqnarray*}
CWE(C_D)&=&w_0^{2n}+2nw_0^{\frac{2n}{p}-\frac{p-1}{2p}(1+\sqrt{r})}
\prod_{\rho=1}^{p-1}w_{\rho}^{\frac{2n}{p}+\frac{1}{2p}(1+\sqrt{r})}\\
&&+2nw_0^{\frac{2n}{p}-\frac{p-1}{2p}(1-\sqrt{r})}
\prod_{\rho=1}^{p-1}w_{\rho}^{\frac{2n}{p}+\frac{1}{2p}(1-\sqrt{r})}.\\
\end{eqnarray*}
\end{theorem}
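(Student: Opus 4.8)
The plan is to specialize the general enumerator formula \eqref{eq:generalcwe} to $N=4$ and $l=2$. Writing $I=\{i_1,i_2\}$, the complete weight enumerator is entirely governed by the four Gaussian-period sums $S_k:=\sum_{i\in I}\eta_{k+i}^{(4,r)}=\eta_{k+i_1}^{(4,r)}+\eta_{k+i_2}^{(4,r)}$ for $k=0,1,2,3$, and in fact only by their multiset: replacing $I$ by a translate $I+c$ merely replaces $\{S_k\}_k$ by $\{S_{k+c}\}_k$, under which the outer sum over $k$ in \eqref{eq:generalcwe} is invariant. A short check then shows the multiset $\{S_0,S_1,S_2,S_3\}$ depends only on $i_2-i_1\bmod 4$, and that $d$ and $4-d$ give the same multiset, so there are exactly two combinatorial types: the \emph{opposite} pairs $I\in\{\{0,2\},\{1,3\}\}$ and the \emph{adjacent} pairs $I\in\{\{0,1\},\{1,2\},\{2,3\},\{0,3\}\}$. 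This is precisely the dichotomy appearing in part (a).

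Next I would read off the pairwise sums $\eta_i^{(4,r)}+\eta_j^{(4,r)}$ from Lemma \ref{lem:gauss period of order 4}. For part (a), with $p\equiv1\pmod4$: the opposite sums are $\eta_0^{(4,r)}+\eta_2^{(4,r)}=-\frac{1+\sqrt r}{2}$ and $\eta_1^{(4,r)}+\eta_3^{(4,r)}=-\frac{1-\sqrt r}{2}$ (the $u_1$-, resp.\ $v_1$-, contributions cancel), whereas the four adjacent sums $\eta_0^{(4,r)}+\eta_1^{(4,r)}$, $\eta_1^{(4,r)}+\eta_2^{(4,r)}$, $\eta_2^{(4,r)}+\eta_3^{(4,r)}$, $\eta_3^{(4,r)}+\eta_0^{(4,r)}$ evaluate to $-\frac{1\pm r^{1/4}(u_1+2v_1)}{2}$ and $-\frac{1\pm r^{1/4}(u_1-2v_1)}{2}$. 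Hence for an opposite pair the multiset $\{S_k\}$ consists of $-\frac{1+\sqrt r}{2}$ and $-\frac{1-\sqrt r}{2}$, each with multiplicity $2$, and for an adjacent pair it consists of the four distinct values above, each with multiplicity $1$. Substituting these (and $nl=2n$) into \eqref{eq:generalcwe} and collecting the $r^{1/4}$-terms gives the two displayed formulas of (a).

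For part (b), with $p\equiv3\pmod4$ (so $m$ is even), Lemma \ref{lem:gauss period of order 4} gives a single exceptional period $\eta_0^{(4,r)}$ with $\eta_1^{(4,r)}=\eta_2^{(4,r)}=\eta_3^{(4,r)}$, so only two pairwise sums occur: $\eta_0^{(4,r)}+\eta_j^{(4,r)}$ and $\eta_i^{(4,r)}+\eta_j^{(4,r)}$ with $i,j\neq0$, equal to $-\frac{1+\sqrt r}{2}$ and $-\frac{1-\sqrt r}{2}$ when $m\equiv0\pmod4$ and to the same two numbers with roles interchanged when $m\equiv2\pmod4$. For any $2$-subset $I=\{i_1,i_2\}$ there are exactly two indices $k$ — namely $k\equiv-i_1$ and $k\equiv-i_2$, which are distinct — for which $\{k+i_1,k+i_2\}\ni0$; so exactly two of the $S_k$ equal the first value and the other two equal the second. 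Thus $\{S_k\}$, and therefore the enumerator, is the same for every $I$ and for both residues of $m\bmod 4$, and \eqref{eq:generalcwe} yields the single formula of (b), with coefficient $2n$ on each weight.

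The only real obstacle is the bookkeeping in part (a): one must confirm that all four adjacent pairs genuinely produce the identical multiset (the translation-invariance remark above), that this multiset differs from the opposite-pair one, and that after the $r^{1/4}$-terms are gathered the four surviving sums match the exponents written in the theorem. Beyond that, everything reduces to routine substitution into \eqref{eq:generalcwe}.
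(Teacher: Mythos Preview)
Your proposal is correct and follows essentially the same approach as the paper: compute the pairwise sums $\eta_i^{(4,r)}+\eta_j^{(4,r)}$ from Lemma~\ref{lem:gauss period of order 4} and substitute into the general formula~\eqref{eq:generalcwe}. Your treatment is in fact slightly more careful than the paper's, since you make the translation-invariance of the multiset $\{S_k\}$ explicit (justifying why all adjacent pairs give the same enumerator) and, in part~(b), give a clean counting argument showing that the result is independent of the choice of $I$ and of $m\bmod 4$, whereas the paper only writes out the consecutive sums and leaves these verifications to the reader.
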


\begin{proof}
From Lemma \ref{lem:gauss period of order 4}, we can deduce the following assertions.\\
(a) If $p\equiv1 \pmod4$ and $m\equiv0 \pmod4$.\\
For the case of $I=\{0,2\}$ and $I=\{1,3\}$, we have
\begin{eqnarray*}
\eta_0^{(4,r)}+\eta_2^{(4,r)} &= -\frac{1+\sqrt{r}}{2}&=\eta_2^{(4,r)}+\eta_0^{(4,r)}, \\
\eta_1^{(4,r)}+\eta_3^{(4,r)} &= -\frac{1-\sqrt{r}}{2}&=\eta_3^{(4,r)}+\eta_1^{(4,r)} . \\
\end{eqnarray*}
And, for other cases, we have
\begin{eqnarray*}
\eta_0^{(4,r)}+\eta_1^{(4,r)} &=& -\frac{1+r^{\frac{1}{4}}(u_1+2v_1)}{2}, \\
\eta_1^{(4,r)}+\eta_2^{(4,r)} &=& -\frac{1-r^{\frac{1}{4}}(u_1-2v_1)}{2}, \\
\eta_2^{(4,r)}+\eta_3^{(4,r)} &=& -\frac{1-r^{\frac{1}{4}}(u_1+2v_1)}{2}, \\
\eta_3^{(4,r)}+\eta_0^{(4,r)} &=& -\frac{1+r^{\frac{1}{4}}(u_1-2v_1)}{2}. \\
\end{eqnarray*}
where $u_1$ and $v_1$ are defined in Lemma \ref{periodpoly fac N=4}.\\
(b) If $p\equiv3 \pmod4$ and $m\equiv0 \pmod2$.\\
For the case of $m\equiv0 \pmod4$, we have
\begin{eqnarray*}
\eta_0^{(4,r)}+\eta_1^{(4,r)} &= -\frac{1+\sqrt{r}}{2} &= \eta_3^{(4,r)}+\eta_0^{(4,r)},\\
\eta_1^{(4,r)}+\eta_2^{(4,r)} &= -\frac{1-\sqrt{r}}{2} &= \eta_2^{(4,r)}+\eta_3^{(4,r)}.\\
\end{eqnarray*}
For the case of $m\equiv2 \pmod4$, we have
\begin{eqnarray*}
\eta_0^{(4,r)}+\eta_1^{(4,r)} &= -\frac{1-\sqrt{r}}{2} &= \eta_3^{(4,r)}+\eta_0^{(4,r)},\\
\eta_1^{(4,r)}+\eta_2^{(4,r)} &= -\frac{1+\sqrt{r}}{2} &= \eta_2^{(4,r)}+\eta_3^{(4,r)}.\\
\end{eqnarray*}
The desired conclusions then follow from Equation \eqref{eq:generalcwe}.
\end{proof}

When $\#I=1$, the result is straightforward from Lemma \ref{lem:gauss period of order 4} and Equation \eqref{eq:generalcwe}.

\begin{theorem}\label{lem:cwe order 41}
Let $N=4$, $r=p^m$ and $N|\frac{r-1}{p-1}$. Let $\#I=1$ and $r-1=nN$. Then the code $C_D$ of \eqref{def:CD} is an $[n,m]$ cyclic code over $\mathbb{F}_p$ and its complete weight enumerator is given as follows.\\
(a) If $p\equiv1 \pmod4$ and $m\equiv0 \pmod4$, then
\begin{eqnarray*}
CWE(C_D)&=&w_0^n+nw_0^{\frac{n}{p}-\frac{p-1}{4p}(1+\sqrt{r}+2r^{\frac{1}{4}}u_1)}
\prod_{\rho=1}^{p-1}w_{\rho}^{\frac{n}{p}+\frac{1}{4p}(1+\sqrt{r}+2r^{\frac{1}{4}}u_1)}\\
&&+nw_0^{\frac{n}{p}-\frac{p-1}{4p}(1+\sqrt{r}-2r^{\frac{1}{4}}u_1)}
\prod_{\rho=1}^{p-1}w_{\rho}^{\frac{n}{p}+\frac{1}{4p}(1+\sqrt{r}-2r^{\frac{1}{4}}u_1)}\\
&&+nw_0^{\frac{n}{p}-\frac{p-1}{4p}(1-\sqrt{r}+4r^{\frac{1}{4}}v_1)}
\prod_{\rho=1}^{p-1}w_{\rho}^{\frac{n}{p}+\frac{1}{4p}(1-\sqrt{r}+4r^{\frac{1}{4}}v_1)}\\
&&+nw_0^{\frac{n}{p}-\frac{p-1}{4p}(1-\sqrt{r}-4r^{\frac{1}{4}}v_1)}
\prod_{\rho=1}^{p-1}w_{\rho}^{\frac{n}{p}+\frac{1}{4p}(1-\sqrt{r}-4r^{\frac{1}{4}}v_1)},\\
\end{eqnarray*}
where $u_1$ and $v_1$ are defined in Lemma \ref{periodpoly fac N=4}.\\
(b) If $p\equiv3 \pmod4$ and $m\equiv0 \pmod4$, then
\begin{eqnarray*}
CWE(C_D)&=&w_0^n+3nw_0^{\frac{n}{p}-\frac{p-1}{4p}(1-\sqrt{r})}
\prod_{\rho=1}^{p-1}w_{\rho}^{\frac{n}{p}+\frac{1}{4p}(1-\sqrt{r})}\\
&&+nw_0^{\frac{n}{p}-\frac{p-1}{4p}(1+3\sqrt{r})}
\prod_{\rho=1}^{p-1}w_{\rho}^{\frac{n}{p}+\frac{1}{4p}(1+3\sqrt{r})}.\\
\end{eqnarray*}
(c) If $p\equiv3 \pmod4$ and $m\equiv2 \pmod4$, then
\begin{eqnarray*}
CWE(C_D)&=&w_0^n+3nw_0^{\frac{n}{p}-\frac{p-1}{4p}(1+\sqrt{r})}
\prod_{\rho=1}^{p-1}w_{\rho}^{\frac{n}{p}+\frac{1}{4p}(1+\sqrt{r})}\\
&&+nw_0^{\frac{n}{p}-\frac{p-1}{4p}(1-3\sqrt{r})}
\prod_{\rho=1}^{p-1}w_{\rho}^{\frac{n}{p}+\frac{1}{4p}(1-3\sqrt{r})}.\\
\end{eqnarray*}
\end{theorem}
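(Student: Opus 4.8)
The proof should follow directly from the master identity \eqref{eq:generalcwe} together with Lemma \ref{lem:gauss period of order 4}, exactly in the spirit of the proofs of Theorems \ref{lem:cwe order 43} and \ref{lem:cwe order 42}. Writing $I=\{i\}$ and setting $N=4$, $l=1$ in \eqref{eq:generalcwe}, one gets
\[
CWE(C_D)=w_0^{n}+n\sum_{k=0}^{3}w_0^{\frac{n}{p}+\frac{p-1}{p}\eta_{k+i}^{(4,r)}}\prod_{\rho=1}^{p-1}w_{\rho}^{\frac{n}{p}-\frac{1}{p}\eta_{k+i}^{(4,r)}}.
\]
Since $k\mapsto k+i\pmod 4$ is a permutation of $\{0,1,2,3\}$, the inner sum ranges over the full multiset $\{\eta_0^{(4,r)},\eta_1^{(4,r)},\eta_2^{(4,r)},\eta_3^{(4,r)}\}$, so the complete weight enumerator does not depend on which class $C_i^{(4,r)}$ was chosen as $D$. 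Ordering $D$ as $\alpha^{i},\alpha^{i+4},\dots,\alpha^{i+4(n-1)}$, a cyclic shift of the coordinates is realized by $x\mapsto\alpha^{4}x$, a bijection of $\mathbb F_r$, so $C_D$ is invariant under the shift, i.e.\ cyclic; and $n\mid r-1$ forces $\gcd(n,p)=1$, so it is a genuine cyclic code of length $n=\#C_i^{(4,r)}$.

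For the dimension, $\dim C_D=m$ as soon as the $\mathbb F_p$-linear map $x\mapsto(\mathrm{Tr}(xd))_{d\in D}$ is injective. If $\mathrm{Tr}(yd)=0$ for all $d\in D$ with $y\ne0$, then $\sum_{d\in D}\zeta_p^{\mathrm{Tr}(yd)}=n$; but $yD$ is again a cyclotomic class of order $4$, so this sum is one of the periods $\eta_j^{(4,r)}$, contradicting Lemma \ref{lem:gauss period of order 4}. Injectivity then means that the summation over $x\in\mathbb F_r$ built into \eqref{eq:generalcwe} is exactly the sum over the $r=1+4n$ distinct codewords of $C_D$, so no correction for repeated codewords is needed.

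It remains to substitute the Gaussian periods of Lemma \ref{lem:gauss period of order 4} into the displayed identity, using that the $w_0$-exponent attached to a period $\eta$ is $\frac{n}{p}+\frac{p-1}{p}\eta$ and the $w_\rho$-exponent $(\rho\ne0)$ is $\frac{n}{p}-\frac{1}{p}\eta$. In part (a) ($p\equiv1\pmod 4$, $m\equiv0\pmod 4$) the four periods $-\frac{1\pm\sqrt r\pm 2r^{\frac{1}{4}}u_1}{4}$ and $-\frac{1\mp\sqrt r\pm 4r^{\frac{1}{4}}v_1}{4}$ are pairwise distinct, producing the four separate terms with coefficient $n$, and a one-line simplification of $\frac{p-1}{p}\bigl(-\frac{1+\sqrt r+2r^{\frac{1}{4}}u_1}{4}\bigr)=-\frac{p-1}{4p}(1+\sqrt r+2r^{\frac{1}{4}}u_1)$ and its three analogues gives the stated exponents. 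In part (b) ($p\equiv3\pmod 4$, $m\equiv0\pmod 4$) one has $\eta_1^{(4,r)}=\eta_2^{(4,r)}=\eta_3^{(4,r)}=-\frac{1-\sqrt r}{4}$, so three of the four summands merge into one term of coefficient $3n$, while $\eta_0^{(4,r)}=-\frac{1+3\sqrt r}{4}$ gives the last term; part (c) is identical with $+\sqrt r$ and $-\sqrt r$ interchanged. Collecting terms yields the three displayed formulas.

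There is essentially no genuine difficulty here: once \eqref{eq:generalcwe} and Lemma \ref{lem:gauss period of order 4} are granted, the computation is bookkeeping, and the only things to watch are (i) the re-indexing $k\mapsto k+i$, which is what makes the answer independent of $i$, (ii) collecting the repeated periods in cases (b) and (c) to produce the factor $3n$, and (iii) tracking the signs and the $\frac14$ factors across the three congruence cases. The one point that is not a pure substitution is the injectivity argument underlying the dimension claim, which is the only place where Lemma \ref{lem:gauss period of order 4} enters other than through direct plugging-in.
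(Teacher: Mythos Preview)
Your proposal is correct and matches the paper's approach: the paper gives no proof beyond the sentence ``the result is straightforward from Lemma \ref{lem:gauss period of order 4} and Equation \eqref{eq:generalcwe},'' and your core argument is precisely that substitution. You actually go further than the paper by justifying the cyclicity of $C_D$ (via the coordinate ordering and the shift $x\mapsto\alpha^4 x$) and the dimension claim (via the injectivity argument using $\eta_j^{(4,r)}\ne n$), neither of which the paper addresses anywhere; these additions are sound and fill genuine gaps in the paper's exposition.
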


\begin{remark}
 We remark that the general strategy of Equation \eqref{eq:general N0} is equivalent to the formula given in Theorem $3.1$ of~\cite{li2015complete} for the special case of $D=C_0^{(N,r)}$. In other words,
\begin{eqnarray}\label{eq:equal}
\eta_{k}^{(N,r)}=\frac{1}{N}\sum_{i=0}^{N-1}G(\bar{\tau}^i)\tau^i(a),
\end{eqnarray}
where $a\in C_k^{(N,r)}$, $\tau=\chi^n$ and $\bar{\tau}$ be the conjugate character of $\tau$. We shall show this in detail.

Recall that $\alpha$ is the primitive element of $\mathbb{F}_{r}$. Let $\eta_{k}^{(N,r)}=\sum_{x\in C_k^{(N,r)}}\psi(x)$, where $\psi(x)=\zeta_p^{\mathrm{Tr}(x)}$ is the canonical additive character over $\mathbb{F}_r$.

Define Gauss sum over $\mathbb{F}_r$ to be
$$G(\lambda)=\sum_{x\in\mathbb{F}_r^*}\lambda(x)\psi(x),$$
where $\lambda$ is a multiplicative character of $\mathbb{F}_r$.

It is known that the set $\widehat{\mathbb{F}}_r^*$ of all the multiplicative characters of $\mathbb{F}_r^*$ forms a group generated by $\chi$, i.e., $\widehat{\mathbb{F}}_r^*=\left<\chi\right>$, where $\chi$ is a multiplicative character of order $r-1$. Then Gauss sums can be regarded as the Fourier coefficients in the Fourier expansion of the restriction of $\psi$ to $\mathbb{F}_r^*$ in terms of the multiplicative characters of $\mathbb{F}_r$. That is
\begin{eqnarray}\label{eq:equal}
\psi(x)=\frac{1}{r-1}\sum_{\lambda\in\left<\chi\right>}G(\bar{\lambda})\lambda(x), ~~\mathrm{for}~~ x\in \mathbb{F}_r^*.
\end{eqnarray}

With above preparation, the left hand side of \eqref{eq:equal} is
\begin{eqnarray*}
 LHS&=&\sum_{x\in C_k^{(N,r)}}\psi(x)\\
&=& \sum_{x\in C_k^{(N,r)}}\frac{1}{r-1}\sum_{\lambda\in\left<\chi\right>}G(\bar{\lambda})\lambda(x) \\
&=& \frac{1}{r-1}\sum_{\lambda\in\left<\chi\right>}G(\bar{\lambda})\sum_{j=0}^{n-1}\lambda(\alpha^k\alpha^{Nj}) \\
&=& \frac{1}{r-1}\sum_{\lambda\in\left<\chi\right>}G(\bar{\lambda})\lambda(\alpha^k)\sum_{j=0}^{n-1}\lambda(\alpha^{Nj}) \\
&=& \frac{n}{r-1}\sum_{\lambda\in\left<\chi^n\right>}G(\bar{\lambda})\lambda(\alpha^k) \\
&=& \frac{1}{N}\sum_{i=0}^{N-1}G(\bar{\chi}^{ni})\chi^{ni}(\alpha^k), \\
\end{eqnarray*}
where the fifth equal sign holds since
\begin{eqnarray*}
\sum_{j=0}^{n-1}\lambda(\alpha^{Nj})=\left\{\begin{array}{lll}
  n  &&\mathrm{if}~~\lambda^N=1,\\
  0 &&otherwise.\\
  \end{array} \right.
\end{eqnarray*}

On the other hand the right hand side of \eqref{eq:equal} is
\begin{eqnarray*}
RHS&=&\frac{1}{N}\sum_{i=0}^{N-1}G(\bar{\tau}^i)\tau^i(a)\\
&=& \frac{1}{N}\sum_{i=0}^{N-1}G(\bar{\chi}^{ni})\chi^{ni}(\alpha^k\alpha^{Nj_0}) \\
&=& \frac{1}{N}\sum_{i=0}^{N-1}G(\bar{\chi}^{ni})\chi^{ni}(\alpha^k). \\
\end{eqnarray*}

Therefore, Equation \eqref{eq:equal} holds for all $x\neq0$.

The results of case $p\equiv2 \pmod3$ in Theorem \ref{lem:cwe order 31} and case $p\equiv3 \pmod4$ in Theorem \ref{lem:cwe order 41} consistent with the result of Theorem $3.5$ in~\cite{li2015complete} for semi-primitive case.

We also mention that the general strategy can be applied to the case of $N\in\{5,6,8,12\}$, though we did not list them here,
by ultilizing the Gaussian periods of order $N\in\{5,6,8,12\}$ since the period polynomial $\Psi_{(N,r)}(X)$ and its factorization were determined explicitly for $N=5$ in~\cite{hoshi2006explicit}, and for $N\in\{6,8,12\}$ in~\cite{gurak2004period}, with quite complex expression.
\end{remark}

\begin{example}
Let $(p,m)=(3,4)$ and $N=4$. Then $r=81$ and $n=20$.
Suppose that $\alpha$ is a primitive element of $\mathbb{F}_{81}$.\\

(1) For the case of $D=C_1^{(4,r)}=\alpha\langle\alpha^4\rangle$.

By Theorem \ref{lem:cwe order 41}, the code $C_{D}$ of \eqref{def:CD} is a $[20,4,12]$ cyclic code over $\mathbb{F}_3$ with complete weight enumerator
\begin{eqnarray*}
CWE(C_{D})=w_0^{20}+60w_0^{8}w_1^{6}w_2^{6}+20w_0^{2}w_1^{9}w_2^{9},
\end{eqnarray*}
which is consistent with numerical computation by Magma.

This code is the best ternary cyclic code of length $20$ and dimension $4$ due to the tables given by Ding~\cite{ding2014differencesets}.

(2) For the case of $D=C_0^{(4,r)}\bigcup C_1^{(4,r)}= \langle\alpha^4\rangle\bigcup \alpha\langle\alpha^4\rangle$.

By Theorem \ref{lem:cwe order 42}, the code $C_{D}$ of \eqref{def:CD} is a $[40,4,24]$ linear code over $\mathbb{F}_3$ with complete weight enumerator
\begin{eqnarray*}
CWE(C_{D})=w_0^{40}+40w_0^{16}w_1^{12}w_2^{12}+40w_0^{10}w_1^{15}w_2^{15},
\end{eqnarray*}
which is consistent with numerical computation by Magma.

(3) For the case of $D=C_0^{(4,r)}\bigcup C_1^{(4,r)}\bigcup C_2^{(4,r)}$.

By Theorem \ref{lem:cwe order 43}, the code $C_{D}$ of \eqref{def:CD} is a $[60,4,36]$ linear code over $\mathbb{F}_3$ with complete weight enumerator
\begin{eqnarray*}
CWE(C_{D})=w_0^{60}+20w_0^{24}w_1^{18}w_2^{18}+60w_0^{18}w_1^{21}w_2^{21},
\end{eqnarray*}
which is consistent with numerical computation by Magma.\\

\end{example}

\begin{example}
Let $(p,m)=(5,4)$ and $N=4$. Then $r=625$ and $n=156$.
Suppose that $\alpha$ is a primitive element of $\mathbb{F}_{625}$.\\

(1) For the case of $D=C_1^{(4,r)}=\alpha\langle\alpha^4\rangle$.

By Theorem \ref{lem:cwe order 41}, the code $C_{D}$ of \eqref{def:CD} is a $[156,4,112]$ cyclic code over $\mathbb{F}_5$ with complete weight enumerator
\begin{eqnarray*}
CWE(C_{D})&=&w_0^{156}+156w_0^{44}(w_1 w_2 w_3 w_4)^{28}+156w_0^{32}(w_1 w_2 w_3 w_4)^{31}\\
&&+156w_0^{28}(w_1 w_2 w_3 w_4)^{32}+156w_0^{20}(w_1 w_2 w_3 w_4)^{34},
\end{eqnarray*}
which is consistent with numerical computation by Magma.

(2) For the case of $D=C_1^{(4,r)}\bigcup C_3^{(4,r)}= \alpha\langle\alpha^4\rangle\bigcup \alpha^3\langle\alpha^4\rangle$.

By Theorem \ref{lem:cwe order 42}, the code $C_{D}$ of \eqref{def:CD} is a $[312,4,240]$ linear code over $\mathbb{F}_5$ with complete weight enumerator
\begin{eqnarray*}
CWE(C_{D})=w_0^{312}+312w_0^{72}(w_1 w_2 w_3 w_4)^{60}+312w_0^{52}(w_1 w_2 w_3 w_4)^{65},
\end{eqnarray*}
which is consistent with numerical computation by Magma.\\

(3) For the case of $D=C_0^{(4,r)}\bigcup C_3^{(4,r)}= \langle\alpha^4\rangle\bigcup \alpha^3\langle\alpha^4\rangle$.

By Theorem \ref{lem:cwe order 42}, the code $C_{D}$ of \eqref{def:CD} is a $[312,4,236]$ linear code over $\mathbb{F}_5$ with complete weight enumerator
\begin{eqnarray*}
CWE(C_{D})&=&w_0^{312}+156w_0^{76}(w_1 w_2 w_3 w_4)^{59}+156w_0^{64}(w_1 w_2 w_3 w_4)^{62}\\
&&+156w_0^{60}(w_1 w_2 w_3 w_4)^{63}+156w_0^{48}(w_1 w_2 w_3 w_4)^{66},
\end{eqnarray*}
which is consistent with numerical computation by Magma.\\

(4) For the case of $D=C_0^{(4,r)}\bigcup C_1^{(4,r)}\bigcup C_2^{(4,r)}$.

By Theorem \ref{lem:cwe order 43}, the code $C_{D}$ of \eqref{def:CD} is a $[468,4,364]$ linear code over $\mathbb{F}_5$ with complete weight enumerator
\begin{eqnarray*}
CWE(C_{D})&=&w_0^{468}+156w_0^{104}(w_1 w_2 w_3 w_4)^{91}+156w_0^{96}(w_1 w_2 w_3 w_4)^{93}\\
&&+156w_0^{92}(w_1 w_2 w_3 w_4)^{94}+156w_0^{80}(w_1 w_2 w_3 w_4)^{97},
\end{eqnarray*}
which is consistent with numerical computation by Magma.\\

\end{example}

\section{Concluding remarks}\label{sec:conclusion}

In this paper, we proposed the complete weight enumerator of a family of
linear code $C_D$ with defining set $D$ constructed from cyclotomy. The formulae for the general strategy and two special cases of $N=3$ and $N=4$ were presented by employing Gaussian periods.
This indicates that the complete weight enumerator of $C_D$ can be determined by the explicit Gaussian periods. As is well known that, the determination of Gaussian periods is quite complicated, so is the complete weight enumerator of $C_D$.

\ifCLASSOPTIONcaptionsoff
  \newpage
\fi

\end{document}